\documentclass[letterpaper]{article}
\usepackage{iclr2026_conference,times}
\iclrfinalcopy 

\usepackage[dvipsnames]{xcolor} 

\usepackage[utf8]{inputenc}
\usepackage[T1]{fontenc}
\usepackage[dvipsnames]{xcolor}
\usepackage{amsmath,amsfonts,amssymb,amsthm}
\usepackage{bm}
\usepackage{dsfont} 
\usepackage{thmtools} 
\usepackage{thm-restate}

\usepackage[dvipsnames]{xcolor} 
\usepackage{graphicx}
\usepackage{booktabs}
\usepackage{subcaption}
\usepackage{enumitem}

\usepackage{hyperref}
\hypersetup{
    colorlinks=true,
    allcolors=RoyalBlue,
    pdfencoding=auto,
}

\usepackage[capitalize,noabbrev,nameinlink]{cleveref}

\setlength{\abovedisplayskip}{7pt plus 2pt minus 5pt}
\setlength{\belowdisplayskip}{7pt plus 2pt minus 5pt}
\setlength{\marginparwidth}{2cm}




\newcommand{\cD}{\mathcal{D}} 

\newcommand{\cA}{\mathcal{A}}
\newcommand{\cH}{\mathcal{H}}


\newcommand{\bOne}{{\bm 1}}

\newcommand{\ERM}{\text{ERM}}


\newcommand{\EE}[1]{\mathbb{E}\left[#1\right]}
\newcommand{\PP}[1]{\mathbb{P}\left(#1\right)}


\newtheorem{definition}{Definition}
\newtheorem{lemma}{Lemma}
\newtheorem{theorem}{Theorem}

\newtheorem{corollary}{Corollary}

\newtheorem{assumption}{Assumption}
\crefname{assumption}{Assumption}{Assumptions}   
\Crefname{assumption}{Assumption}{Assumptions}   

\theoremstyle{remark}









\newcommand{\NN}{{\mathbb N}}

\newcommand{\EEc}[2]{\mathbb{E}\left[#1\left|#2\right.\right]}

\newcommand{\PPs}[2]{\mathbb{P}_{#1}\left(#2\right)}








\newcommand{\cP}{\mathcal{P}}

\newcommand{\cX}{\mathcal{X}}


\newcommand{\bp}{\boldsymbol{p}}

\renewcommand{\epsilon}{\varepsilon}
\renewcommand{\hat}{\widehat}
\renewcommand{\tilde}{\widetilde}
\renewcommand{\bar}{\overline}

\newcommand{\nothere}[1]{}



\newcommand{\vcd}{\mathrm{VCdim}}


\newcommand{\DIS}{\mathrm{DIS}}

\newcommand{\A}{\mathcal{A}}

\newcommand{\err}{{\textrm{err}}}

\definecolor{teal}{rgb}{0.0, 0.5, 0.5}
\definecolor{RoyalBlue}{rgb}{0.254, 0.411, 0.882}
\definecolor{green}{HTML}{A2C892}

\crefformat{equation}{#2Equation~#1#3}
\Crefformat{equation}{#2Equation~#1#3}  

\def\err{\operatorname{err}}

\usepackage{fancyhdr}
\pagestyle{fancy}

\title{Incentives in Federated Learning with Heterogeneous Agents}

\author{
Ariel D. Procaccia \\
Harvard University \\
\And
Han Shao \\
University of Maryland, College Park \\
\And
Itai Shapira\thanks{Authors listed alphabetically. Correspondence to: Itai Shapira \href{mailto:itaishapira@g.harvard.edu}{\texttt{itaishapira@g.harvard.edu}}. Han Shao: Work mostly done while at Harvard University.} \\
Harvard University 
}

\setlength{\marginparwidth}{2cm}

\ificlrfinal
  \fancyhead[L]{Published as a conference paper at ICLR 2026}
\else
  \fancyhead[L]{Under review as a conference paper at ICLR 2026}
\fi

\begin{document}
\maketitle
\thispagestyle{fancy}

\begin{abstract}
    Federated learning promises significant sample-efficiency gains by pooling data across multiple agents, yet incentive misalignment is an obstacle: each update is costly to the contributor but boosts every participant. We introduce a game-theoretic framework that captures heterogeneous data: an agent’s utility depends on who supplies each sample, not just how many. Agents aim to meet a PAC-style accuracy threshold at minimal personal cost. We show that uncoordinated play yields pathologies: pure equilibria may not exist, and the best equilibrium can be arbitrarily more costly than cooperation. To steer collaboration, we analyze the cost-minimizing contribution vector, prove that computing it is NP-hard, and derive a polynomial-time linear program that achieves a logarithmic approximation. Finally, pairing the LP with a simple pay-what-you-contribute rule—each agent receives a payment equal to its sample cost—yields a mechanism that is strategy-proof and, within the class of contribution-based transfers, is unique.
\end{abstract}

\section{Introduction}
Federated learning (FL) is a collaborative training framework in which multiple agents—each holding a distinct dataset—jointly optimize a global model while keeping data local. Collaboration allows agents to tap into information spread across heterogeneous records—for example, a network of hospitals pooling imaging data from distinct patient demographics to detect rare conditions sooner. Although each agent could independently train a model, collaboration offers higher accuracy or comparable performance with significantly fewer examples~\citep{blum2017collaborative}, enhancing both individual and federation-wide welfare.

However, realizing these gains hinges on incentives. Contributing model updates incurs costs in compute, bandwidth, curation effort, and privacy risk, while the global model produced by collective learning is a non-excludable public good: once trained, every agent benefits from its accuracy regardless of individual effort. This asymmetry invites a classic free-rider dilemma~\citep{yang2015incentive,ahmed2023frimfl, karimireddy2022mechanisms}: as one agent’s data lifts others’ accuracy while the contributor alone incurs the cost, each participant is tempted to trim its share once its own target is met. The resulting free-riding slows training and can ultimately exhaust the data pool that makes FL viable.

Consequently, federated learning can be viewed as a strategic game: each agent picks a contribution level to maximize a private utility that trades its own labeling cost against the benefit of the joint model. Existing models of incentives in federated learning either assume each agent's utility depends on the local labeled data distributions of all agents~\citep{blum2021one}, or treat agents as homogeneous, which crucially implies that data are \emph{exchangeable} and each agent's utility depends solely on the total amount of data contributed across all agents~\citep{karimireddy2022mechanisms,murhekar2024incentives}.

 However, in typical federated learning scenarios, agents face \emph{heterogeneous data distributions} and are primarily interested in improving performance on their own local distribution.  This phenomenon—local data typically yielding greater marginal utility for local performance than data from other agents—has been well documented in the personalization and domain adaptation literature~\citep{ben2006analysis,bhunia2021metahtr,hsu2019measuring}, and we empirically confirm that it is present in federated learning, as shown in \cref{fig:accuracy_loss_combined}. We therefore model utilities that depend on \textit{who} supplies each sample, not just how many.

In this work we ask: \emph{how can incentives be aligned in this heterogeneous game?} We adopt a PAC accuracy objective—each agent wants their test error below $\varepsilon$ with confidence $1-\delta$—and study the game induced by this objective.  Left on their own, agents settle into a contribution equilibrium that can be inefficient: some agents free-ride, others overspend, and total cost can explode relative to full cooperation. The central challenge is therefore to design contribution and transfer rules that coax self-interested, heterogeneously distributed agents into pooling enough data so that each meets its own accuracy target on its own distribution, while keeping the federation’s total cost near minimal.

\paragraph{Our Results.} 
~\cref{sec:model} builds a data-heterogeneous FL game with PAC thresholds: an agent’s utility depends on \emph{who} supplies the samples, not just how many.  
~\cref{sec:equilibria} shows that decentralized play can fail badly—pure Nash equilibria may not exist and, when they do, their total cost can exceed the cooperative optimum by an unbounded factor (Price of Stability~$\to\infty$).   Motivated by this gap, ~\cref{sec:coordination} assumes a planner with full information and full control. We prove that computing the cost-minimizing contribution vector is NP-hard (\Cref{thm:nphard}).  Nonetheless, the PAC constraints admit a linear program (LP) whose cost is within a logarithmic factor of the optimum (\Cref{thm:approximate-mts}); we use this LP allocation as the foundation of the mechanism design that follows. ~\cref{sec:mechanism} drops the full-information assumption: agents now report private distributions.  Pairing the LP allocation with a simple \emph{pay-what-you-contribute} rule—each agent receives a payment equal to its sample cost—yields a mechanism that is strategyproof, and we derive conditions under which it uniquely satisfies these properties. Finally, while our main contributions are theoretical, we complement them with empirical validation and illustrative simulations in \cref{app:empirical} that demonstrate the practical relevance of our assumptions and algorithms. 

 \begin{figure}[t!]
    \centering
    \begin{subfigure}[t]{\textwidth}
        \centering
            \includegraphics[width=\linewidth]{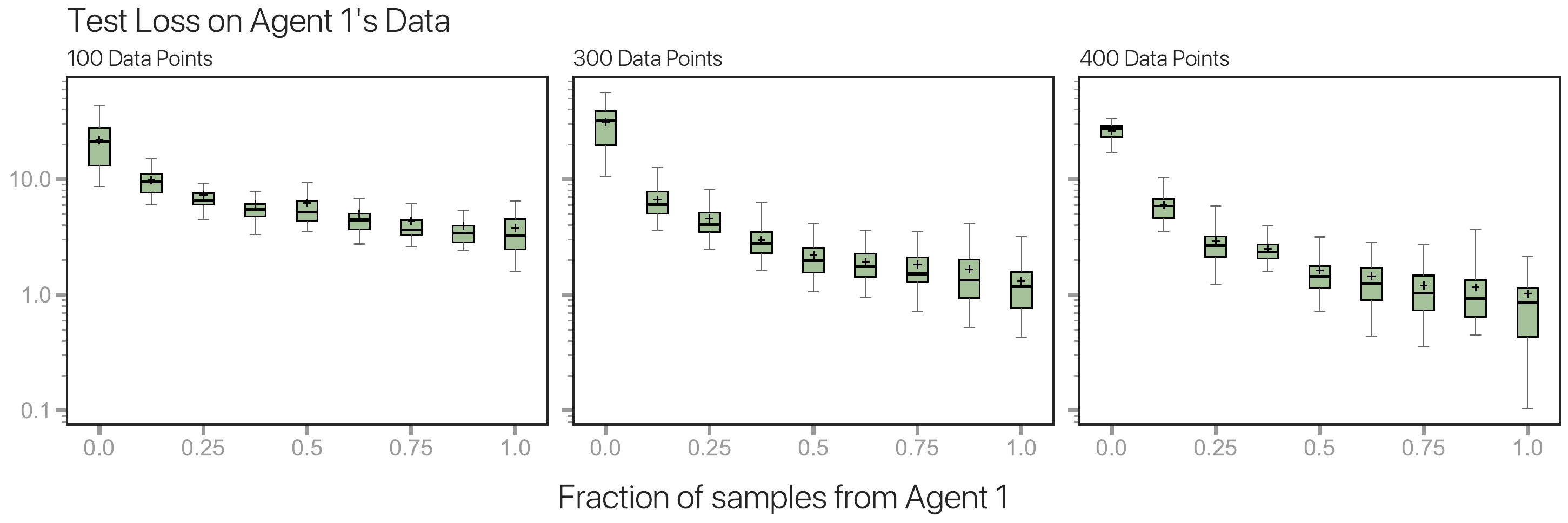}
        \label{fig:boxplots}
    \end{subfigure}%
\caption{An agent’s expected loss falls as a larger share of a fixed training set comes from their \emph{own} distribution.
With a budget of $m$ data points, we sample $\lambda m$ from Agent 1 and $(1-\lambda)m$ from Agent 2 on FEMNIST \citep{LEAF}, train a classifier, and repeat $100$ times. For each $m$, Agent 1’s loss decreases monotonically in $\lambda$, confirming that data are \emph{not} exchangeable—utilities depend on \emph{who} contributes. Full details in \cref{sec:data-composition-performance}.}
\label{fig:accuracy_loss_combined}
\end{figure}
\paragraph{Related Work.}
Early work in FL centered on communication-efficient optimization and fairness  under fully cooperative, i.i.d. or average-loss assumptions, with no pricing of data or modeling of strategic behavior. For instance, FedAvg optimizes average loss, while Agnostic FL and q-FFL reweight loss for worst-case or fairness gains, all assuming truthful reports~\citep{mcmahan2017communication,mohri_agnostic_2019,li_fair_2020}. 
\citet{blum2017collaborative} introduced a collaborative PAC model and showed that pooling across $k$ heterogeneous tasks cuts sample complexity by $\tilde{O}(\log k)$, but participation is mandatory and incentives are ignored.

Subsequent work introduced incentives but frequently assumed that agents’ data are exchangeable, so that model accuracy depends only on the total sample count contributed across all participants. Under this framework, various incentive strategies—such as transfer payments or reputational rewards—have been proposed to combat free-riding. For example, ~\citet{ karimireddy2022mechanisms}  tie each agent’s model quality to its data contribution (no external payments), whereas ~\citet{ murhekar2024incentives} design budget-balanced monetary transfers that implement welfare-maximizing equilibria; and \citep{lin_free-riders_2019,kang_incentive_2019,sarikaya_motivating_2019,ding_incentive_2020,fraboni_free-rider_2021} rely on reputation mechanisms and credit sharing. These designs treat data as interchangeable: model quality depends solely on the total sample count, so the marginal value of each sample ignores \textit{who} provides it, overlooking realistic heterogeneity across data distributions.

A separate line of work studies heterogeneous data, aiming to capture how collaboration might form among agents with distinct interests. ~\citet{donahue_optimality_2021}, and ~\citet{hasan_incentive_2021} analyze coalition and Nash stability in model-sharing games and show that agents split into sub-coalitions when a global model biases some distributions.
~\citet{blum2021one} further show that in a personalized PAC game with distribution-specific payoffs, pure Nash equilibria may not exist, and when they do, they can be arbitrarily inefficient, underscoring the fragility of cooperation absent incentives. These game-theoretic models emphasize that heterogeneity can severely complicate collaboration: individual incentives may fail to align with socially optimal pooling of data. Our work extends these papers by introducing a concrete utility model based on PAC-style threshold guarantees where each agent requires that the global model meet a distribution-specific accuracy threshold with high confidence. 

Recent studies tackle truthfulness under heterogeneity: peer-prediction payments \citep{pang2024iclr}, budget-balanced truthful gradient schemes \citep{chakarov2024neuripsWS}, and an optimal truthful mechanism for data sharing with interdependent valuations \citep{chen_optimal_2020}; yet none compute minimum-cost allocations that meet each agent’s welfare target.

\section{Model}
\label{sec:model}

\subsection{Setup and Learning Protocol}
We now make the collaborative game precise. Each agent selects how many examples to contribute; the federation pools these examples, trains a model, and each agent then evaluates the model on its own data distribution. 
\vspace{-0.7\baselineskip}
\paragraph{Agents and data.}
Consider \( k \) agents \(\mathcal{A} = \{1, \dots, k\}\) who wish to jointly learn a shared predictor but individually decide how much data to contribute. Let \(\mathcal{X}\) denote the instance space and \(\mathcal{Y}\) the label space. A hypothesis is a function \( h: \mathcal{X} \to \mathcal{Y} \) mapping instances to predicted labels. Fix a hypothesis class \(\mathcal H\) with VC dimension \(d\), and assume agents seek to approximate an unknown target function \(h^\star \in \mathcal{H}\). Each agent \( i \in \mathcal{A} \) has access to a local marginal data distribution \(\mathcal{D}_i\) over \(\mathcal{X}\) and can query labels for any data point drawn from this distribution. The collaborative learning process involves two stages:
\vspace{-0.7\baselineskip}
\paragraph{Stage 1 – Sample Collection.} Each agent \( i \) chooses a contribution level \( m_i \in \mathbb{N} \), draws an i.i.d. \textit{unlabeled} dataset \( U_i \sim \mathcal{D}_i^{m_i} \), and queries the true labels of these samples, which are determined by the target function $h^\star$. 
The labeled sets are pooled into a dataset
\[
S \;=\; \bigcup_{i\in\mathcal A} \bigl\{\, (x,h^\star(x)) \mid x\in U_i \bigr\}.
\]
Denote by  \(\mathcal P(\bm{\mathcal D},\mathbf m,h^\star)\) the distribution of this dataset, which is determined by the marginal data distributions \(\bm{\mathcal D}=(\mathcal D_1,\dots,\mathcal D_k)\), the contribution profile \(\mathbf m=(m_1,\dots,m_k)\), and \(h^\star\). 
\vspace{-0.7\baselineskip}
\paragraph{Stage 2 – Model Training.} 
A central server trains a model using Empirical Risk Minimization (ERM) method:
\[
\ERM(S) = \{h \in \mathcal{H} \mid \text{err}_S(h) = \min_{h' \in \mathcal{H}} \text{err}_S(h')\},
\]
where \(\text{err}_S(h)\) is the empirical error of hypothesis $h$ on dataset $S$.\footnote{In practice, algorithms such as FedAvg~\citep{mcmahan2017communication} only approximate the pooled-ERM solution through local updates and model aggregation, often without accessing raw examples, which remain on users’ devices and are never aggregated on the server. In our work, the optimal ERM solution is used as an upper-bound reference point, which lets us abstract away optimization dynamics and focus on incentives and sample complexity. See \cref{sec:discussion} for more discussion.} For any hypothesis $h$, marginal data distribution $\cD$, and target function $h^\star$, the generalization error is defined as 
\[\err_{\cD,h^\star}(h) = \PPs{x\sim \cD}{h(x)\neq h^\star(x)}\,.\]
Given a labeled training set $S$, we define the generalization error of running ERM over $S$ relative to \((\mathcal D,h^\star)\) as:
\begin{equation}\label{eq:worst_case}
\text{err}^{\text{ERM}}_{\mathcal{D},h^\star}(S) = \max_{h \in \ERM(S)} \text{err}_{\mathcal{D},h^\star}(h),
\end{equation}
to reflect the generalization performance of an empirically optimal hypothesis under distribution \(\mathcal{D}\) with labeling function $h^\star$. That is, if the ERM returns multiple minimizers, we take the worst generalization error, ensuring every bound holds under arbitrary tie-breaking. 

Stages 1 and 2 define our collaborative game. Each agent \(i\) chooses an integer contribution level \(m_i\). The server draws \(m_i\) i.i.d.\ samples from \(\cD_i\) for each agent, pools the labeled data into \(S\), and trains a single global ERM predictor \(\hat h(S)\) over \(\cH\). The next subsection, \cref{subsection::pac_accuracy}, specifies the payoffs from this global model and the cost of contributing data.

\subsection{PAC Accuracy Objective} \label{subsection::pac_accuracy}
Each player ultimately wants a model whose test error on its \emph{own} distribution is low; if collaboration fails, the fallback is to train alone. We focus on a single, widely-used performance criterion to formalize this goal: a Probably Approximately Correct (PAC) accuracy threshold.\footnote{Other metrics, such as expected error, can be treated analogously; see \Cref{sec:discussion}.} In standard realizable PAC learning, an agent wants—with probability at least $(1-\delta)$—to keep its generalization error below a tolerance $\varepsilon$.\footnote{The heterogeneous (different $\varepsilon_i$ for each agent) case is analogous, as we discuss in \Cref{sec:discussion}; we present the common-$\varepsilon$ case for clarity.} Here we adapt that notion to our federated setting, where (1) the learner uses ERM on the pooled samples and (2) each agent $i$ draws from its own fixed marginal distribution $\mathcal{D}_i$. We then define the PAC accuracy objective as follows: agent
\(i\) requires that, \emph{for every} target hypothesis \(h^\star\in\cH\),
\begin{equation} \label{def:pac-obj}
\Pr_{S\sim
  \mathcal{P}(\bm{\mathcal D},\bm m,h^\star)}\!\Bigl[
   \text{err}^{\ERM}_{\cD_i,h^\star}(S)\le\varepsilon
 \Bigr]\;\ge\;1-\delta.
\end{equation}
We use \(a_i^{\varepsilon,\delta}(\bm m)\) to denote a \emph{binary} variable that is 1 if \Cref{def:pac-obj} is satisfied, and 0 otherwise; we refer to this as agent $i$'s $(\varepsilon,\delta)$-\textit{requirement} or \textit{target}.

\paragraph{Cost of Contributions.} 
Contributing data incurs costs. Computation, and privacy risk all translate into a per-sample monetary burden. We capture this with a linear cost: when agent $i$ supplies $m_i$ samples, it pays $c_i m_i$, where $c_i>0$ denotes the cost per sample. The parameter $c_i$ can reflect labeling fees, extra compute, or other participation overhead.

\paragraph{Utility Functions.} 
Each agent balances the model’s benefit against the data cost. We normalize monetary units so that achieving the PAC goal is worth exactly one unit of payoff, yielding a simplified expression for utility:
\begin{equation}\label{eq:indicator-util}
u_i(\mathbf m)\;=\;a_i^{\varepsilon,\delta}(\bm m) - c_i m_i.
\end{equation}

To ensure every agent's goal is attainable and worth pursuing, we assume \textit{self-sufficiency:} every agent could, in the worst case, meet its own accuracy requirement by training on its own data alone.
Concretely, let $n_i^{\text{ind}}$ denote the smallest number of samples that agent $i$ would need to label by itself (with no contributions from others) to satisfy the $(\varepsilon,\delta)$-requirement on \(\mathcal{D}_i\). Under PAC accuracy, for all marginal distributions, it holds that 
\[
n_i^{\text{ind}}\;\le\;O\!\Bigl(\tfrac{d+\ln(1/\delta)}{\varepsilon}\Bigr),
\]
where \(d=\text{VCdim}(\mathcal H)\), the VC dimension of $\mathcal{H}$.
\begin{assumption}\label{self_sufficiency}
$c_i\,n_i^{\text{ind}} \;<\; 1, \ \text{for every }i.$
\end{assumption}
This self-sufficiency implies that failing to meet the accuracy threshold results in strictly lower utility than meeting it (since an agent can always fall back on solo training to obtain the benefit, albeit at potentially high cost). Therefore, the agent-level decision problem in \cref{eq:indicator-util} is to choose $m_i$ so as to minimize its own cost $c_i m_i$ subject to satisfying its $(\varepsilon,\delta)$-accuracy requirement in \cref{def:pac-obj}. 

This binary threshold payoff can represent a broad class of quasi-linear utility functions, and captures richer behavior than its simple indicator form suggests at first glance. Under  regularity assumptions, \cref{app:binary-utilities} shows via a Lagrange-multiplier argument that any optimal solution to above constrained problem is also optimal for some unconstrained objective that trades off continuous accuracy against cost; in particular, there is a multiplier $\lambda_i$ such that the optimal contribution $m_i^*$ that meets the PAC constraint also maximizes a quasi-linear utility that penalizes cost at rate $\lambda_i$. Intuitively, $\lambda_i$ is the agent’s shadow price for accuracy, and the point where our indicator flips from $0$ to $1$ is exactly where the marginal benefit of additional accuracy falls to this price. Thus the threshold model is a concise way to capture the saturation behavior of a wide family of smooth, monotone, diminishing-returns utilities, rather than an assumption that agents literally have zero value below the target.

\paragraph{Social Cost Optimization.}
A central planner seeks to maximize \emph{utilitarian social welfare}, namely \(\sum_{i\in\mathcal A} u_i(\mathbf m)\).  Under ~\cref{self_sufficiency}, this welfare objective coincides with minimizing the aggregate labeling cost subject to the same PAC constraints.  Thus, from a global standpoint, the problem is to find the contribution profile that lets every agent meet its accuracy target at the lowest possible total cost.
\begin{align}\label{eq:social-obj}
    & \min_{\mathbf m}\;\mathbf c^\top \mathbf m\\
\text{s.t.}\quad &
a_i^{\varepsilon,\delta}(\bm m) = 1 
\quad\forall\,i\in\mathcal A.\nonumber
\end{align}
Here \(\mathbf c^\top \mathbf m\) is the federation’s total labeling cost, and the constraint is the PAC guarantee of~\cref{def:pac-obj}: for every agent \(i\) and every target hypothesis \(h^\star\), the ERM model must, with probability at least \(1-\delta\), achieve error at most \(\varepsilon\) on \(\mathcal D_i\).  The planner therefore seeks the cheapest contribution profile that satisfies \emph{all} agents’ $(\varepsilon,\delta)$-requirements simultaneously.

\paragraph{Warm-Up Example.} Consider a single agent whose marginal distribution is uniform over the $n$ distinct points ${x_1, \ldots, x_n} \subset X$ and a hypothesis class that contains every possible labeling of those points. When $\varepsilon < 1/n$, satisfying the PAC accuracy threshold below $\varepsilon$ with probability at least $1-\delta$ requires drawing enough samples (with replacement) so that every point appears at least once. With $k > 1$ agents, we can see how collaboration can improve sample complexity: consider $k$ agents whose distribution places most of its probability mass on a different point $x_i$, while still assigning every other point probability at least $\varepsilon$. By pooling data, agents supply one another with examples they rarely see, allowing each to collect fewer local samples while the group still satisfies the PAC threshold.

\section{The Inefficiency of Equilibria}
\label{sec:equilibria}
With the cooperative optimum in hand (\cref{eq:social-obj}), we now analyze the uncoordinated game in which each agent strategically chooses its sample size. We quantify the cost gap between equilibrium behavior and the social optimum and show that a Nash equilibrium (NE) can be arbitrarily more costly. This gap motivates central coordination, which we develop next, and it serves as the conceptual starting point for the rest of the paper.

A contribution profile $\mathbf m$ is a NE if no agent can raise its utility by unilaterally changing its sample count. Since each agent can always fall back to its self-sufficient plan $n_i^{\text{ind}}$, an equilibrium exists only when every accuracy threshold is met; otherwise the under-served agent would deviate to $(n_i^{\text{ind}},\mathbf m_{-i})$ to raise its utility. Thus, at equilibrium no agent wants to cut its contribution below the threshold or add extra samples.

It is straightforward to show that pure NE may not exist in our framework (\cref{thm:non_existance_2}), whereas mixed NE always exists (\cref{thm:existence}). Even when a pure equilibrium exists, it can be highly inefficient. To illustrate this, consider a simple two-player setting. Let the instance space be \(\mathcal{X}=\{x_A,x_B\}\). Two players, Alice and Bob, each place most of their probability mass on a different point: for a small \(\varepsilon\in(0,\tfrac14)\),
\begin{align*}
\mathcal{D}_{\text{Alice}}(x_A)&=1-2\varepsilon,&
\mathcal{D}_{\text{Alice}}(x_B)&=2\varepsilon,\\[6pt]
\mathcal{D}_{\text{Bob}}(x_A)&=2\varepsilon,&
\mathcal{D}_{\text{Bob}}(x_B)&=1-2\varepsilon.
\end{align*}
Under the PAC objective, each agent requires that \emph{both} points be correctly classified with high probability. If Alice and Bob each contribute one labeled example, then with probability \((1-2\varepsilon)^2\) Alice draws \(x_A\) while Bob draws \(x_B\). Thus \((m_{\text{Alice}},m_{\text{Bob}})=(1,1)\) meets both thresholds at total cost \(c_{\text{Alice}}+c_{\text{Bob}}=\Theta(1)\) and is optimal. If one free-rides, the other must keep sampling until seeing \emph{both} points, requiring \(\Omega(1/\varepsilon)\) samples. Since neither player can unilaterally lower this cost, this profile is an NE with a cost of \(\Omega(1/\varepsilon)\).

This example shows that some equilibria can be much costlier than the social optimum even though a cost-minimal equilibrium exists. How large can the best equilibrium’s cost be? We formalize it with the Price of Stability (PoS) \citep{anshelevich2008price}: the ratio of total samples in the least-cost equilibrium to those in the planner’s optimum. PoS captures the cost of decentralization. Even the best self-enforcing outcome can require far more samples than the cooperative minimum. In the following, we show that PoS can be arbitrarily large:
\begin{restatable}[PoS is unbounded]{theorem}{pos}\label{thm:pos}
For any $\varepsilon\in(0,1)$ there exists a sequence of instances of our problem in which the ratio of the best NE to the optimal solution approaches $ \Omega\!\bigl(\log(1/\varepsilon)\bigr).$
\end{restatable}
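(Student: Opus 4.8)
The plan is to construct, for each $\varepsilon$, an explicit family of instances (indexed by a growing size parameter $n$) whose cooperative optimum from \cref{eq:social-obj} is cheap but whose \emph{every} pure equilibrium is expensive, forcing the ratio in \Cref{thm:pos} to grow like $\log(1/\varepsilon)$. The guiding principle is to break the coincidence that makes the two-player vignette above have Price of Stability exactly $1$: there the profile $(1,1)$ is at once optimal and an equilibrium. To separate stability from optimality I will introduce a \emph{rare} resource---a low-probability point that some agents must cover to satisfy \cref{def:pac-obj}---whose cheapest coverer is an agent that does \emph{not} itself require it. In the cooperative optimum that agent pays once and the whole federation benefits; but since covering the resource is never pivotal for its own $(\varepsilon,\delta)$-target, no best response will ever fund it, and the burden is pushed onto agents that do require the resource yet can only cover it far less efficiently.

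Concretely, I would use a constant number of points $\{\alpha_i\}\cup\{z\}$ together with distributions in which the resource $z$ carries mass just below $\varepsilon$ for a designated low-cost agent---so that the coverage characterization behind \cref{def:pac-obj} lets that agent leave $z$ uncovered---and mass above $\varepsilon$ for the remaining agents, who are therefore forced to cover $z$; every agent anchors the rest of its mass on a private heavy point $\alpha_i$ it covers with a single draw. A coupon-collector estimate shows that covering $z$ costs $\Theta\bigl(\tfrac{1}{\varepsilon}\log\tfrac{1}{\delta}\bigr)$ samples from whoever provides it. The optimum routes this coverage through the cheap agent, whereas every equilibrium routes it through a costlier one; calibrating the per-sample costs $c_i$ to the largest spread permitted by \Cref{self_sufficiency} (which bounds each $c_i\,n_i^{\mathrm{ind}}$ below $1$) then drives the optimum-to-equilibrium ratio up to $\Omega(\log(1/\varepsilon))$. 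Taking $n\to\infty$ copies of this gadget, sharing the cheap provider, lets the bound accumulate cleanly.

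The crux is not exhibiting one expensive equilibrium---the vignette already does that---but ruling out every cheap one, since the Price of Stability compares against the \emph{best} equilibrium. I would establish this through a structural characterization: because each agent can fall back on its self-sufficient plan $n_i^{\mathrm{ind}}$ and, by \Cref{self_sufficiency}, meeting its target is worth strictly more than that fallback cost, every best response equals the minimum own-contribution that meets the agent's target given the others. Hence at any equilibrium no agent ever contributes a sample that is not pivotal for its own target, so the low-cost provider funds nothing toward $z$ in \emph{any} equilibrium and the expensive coverage is unavoidable. The main technical obstacle is turning this local, per-agent statement into a global lower bound on total cost that holds uniformly over all equilibria; I expect to handle it with a covering/counting argument combined with the coupon-collector concentration bounds needed to translate equilibrium sample counts into the PAC coverage probabilities of \cref{def:pac-obj} under arbitrary ERM tie-breaking.
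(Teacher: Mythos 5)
There is a genuine gap: the mechanism you propose does not produce the $\Omega(\log(1/\varepsilon))$ factor. Your gap is driven by \emph{cost asymmetry}---in the optimum a cheap agent covers the rare point $z$, while in every equilibrium an expensive agent covers it---but both coverings take $\Theta\bigl(\tfrac{1}{\varepsilon}\log\tfrac{1}{\delta}\bigr)$ samples, so under the paper's definition of PoS (a ratio of \emph{total sample counts}, $\|\mathbf m^{\mathrm{NE}}\|_1/\|\mathbf m^{\mathrm{OPT}}\|_1$) your construction gives a ratio of $O(1)$. Even if one instead measures cost $\mathbf c^\top\mathbf m$, the resulting ratio is governed by the spread between the agents' per-sample costs, which \cref{self_sufficiency} does not tie to $\log(1/\varepsilon)$ in any way; the calibration step is asserted but cannot be carried out. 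There is also a structural obstruction to ``far less efficient'' coverage: an agent is forced to care about the atom $z$ only if it places mass greater than $\varepsilon$ on it, and any such agent can then cover $z$ itself with $\Theta\bigl(\tfrac{1}{\varepsilon}\log\tfrac{1}{\delta}\bigr)$ of its own draws---the same order as anyone else---so with a constant number of points there is no way to make the equilibrium coverage of $z$ asymptotically more expensive in samples.

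The paper's construction uses the rare point differently: not as a region to be covered, but as an \emph{information channel}. Bob is the only agent with mass on $z$, yet every hypothesis is within $\varepsilon$ of every other on Bob's distribution, so Bob contributes nothing in any equilibrium (this part matches your correct observation that no agent contributes a sample that is not pivotal for its own target). The label of $z$ would reveal which of $n\approx 1/\varepsilon$ hypotheses is the target; without it, Alice must eliminate all $n$ candidates by a coupon collection over $n$ points of mass $\Theta(\varepsilon)$, costing $\Theta\bigl(\tfrac{\log(n/\delta)}{\varepsilon}\bigr)$ samples, whereas in the optimum Bob's $\Theta\bigl(\tfrac{\log(1/\delta)}{\varepsilon}\bigr)$ samples identify the target and Alice then needs only to hit a single designated point, again $\Theta\bigl(\tfrac{\log(1/\delta)}{\varepsilon}\bigr)$ samples. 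The ratio $\tfrac{\log(1/\varepsilon)+\log(1/\delta)}{\log(1/\delta)}$ of these coupon-collector logarithms is exactly where the bound comes from. What your proposal is missing is a gadget in which withholding the un-incentivized agent's samples inflates the other agents' \emph{sample count} by a $\log(1/\varepsilon)$ factor, rather than merely shifting a fixed sample count onto a pricier agent.
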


\section{Central Coordination with Full Information}
\label{sec:coordination}

The unbounded price of stability in \cref{sec:equilibria} shows that self-directed contribution games may squander resources. A natural fix is to appoint a \emph{central planner}, such as a regulator or platform operator, who can dictate how many labeled examples each agent contributes. Before addressing strategic issues, we begin with the technical question of whether the planner can compute the cost-minimizing contribution vector efficiently. In this section we show that the problem is NP-hard (\cref{thm:nphard}), yet it admits an efficient approximation via an LP with logarithmic-type guarantees (\cref{thm:approximate-mts}). The next section returns to strategy under coordination and builds on this to design payments that align incentives.

To isolate the computational question and remove strategic frictions for now, in this section we grant the planner two powers. First, it has \textit{full information}—it observes every marginal distribution $\mathcal{D}_i$ and each per-sample cost $c_i$. Second, it has \textit{complete control}—if an agent opts in, the planner can compel it to supply the prescribed number of samples $m_i$. These assumptions fit cases where data statistics are public—e.g., hospital demographics or published mobile-traffic summaries—and participation is contractual. With strategic frictions removed, the task is now purely computational: who should collect how many samples? We tackle that question here and relax the full-information assumption in \cref{sec:mechanism}.

\subsection{Collaborative PAC Sample-Allocation Problem}
\label{sec:opt-collaboration}
Given a contribution vector \(\mathbf m\), let \(S\) be the pooled sample obtained by drawing \(m_i\) points independently from each \(\mathcal D_i\). We call \(\mathbf m\) \emph{feasible} if, with probability at least \(1-\delta\), every ERM hypothesis trained on \(S\) has error at most \(\varepsilon\) on every \(\mathcal D_i\). The planner specifies only the draw counts from each distribution. Under these full-information assumptions, the task reduces to \cref{eq:social-obj}. Our first question is thus computational: is this optimization tractable? We answer in the negative. The result below shows NP-hardness in \(|\cH|\) even with a single agent. The full proof is in \cref{app:nphard}.

\begin{restatable}{theorem}{nphard}\label{thm:nphard}
Under the PAC‐accuracy objective, determining whether a specified sample count \(m\) suffices to meet the $(\varepsilon,\delta)$-requirement is NP-hard with respect to the hypothesis-class size \(|\mathcal H|\), even when there is only one agent.
\end{restatable}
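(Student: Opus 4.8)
The plan is to specialize to a single agent and reduce from a canonical NP-complete covering problem. The first step is to rewrite the requirement in combinatorial form. In the realizable setting the label of every sampled point agrees with the target $h^\star$, so $\ERM(S)$ is exactly the version space of hypotheses consistent with $h^\star$ on the sampled points. Consequently $\err^{\ERM}_{\mathcal{D},h^\star}(S) > \varepsilon$ happens precisely when some \emph{bad} hypothesis $h$ — one whose disagreement region $\Delta(h) = \{x : h(x)\neq h^\star(x)\}$ has mass $\mu(\Delta(h)) > \varepsilon$ — survives, i.e. $S \cap \Delta(h) = \emptyset$. Thus for a fixed target the $(\varepsilon,\delta)$-requirement fails exactly when
\[
\Pr_{S\sim\mathcal{D}^m}\!\Bigl[\exists\,\text{bad }h:\ S\cap\Delta(h)=\emptyset\Bigr] \;>\; \delta,
\]
a union-of-missed-events probability. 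Since $\mathcal{H}$ is given explicitly, the bad hypotheses and their regions $\Delta(h)$ are computable directly, and the only remaining computation is this probability; so it suffices to prove that deciding whether it exceeds $\delta$ is NP-hard already for a single designated target.

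Next I would construct the instance. Take $\mathcal{D}$ to be a fixed finite distribution over a ground set $\mathcal{X}$ that encodes the elements of the covering instance, and design $\mathcal{H}$ around a distinguished target $h^\star$ so that each forbidden/constraint set of the reduction appears as the disagreement region $\Delta(h)$ of exactly one bad hypothesis, with the desired \emph{overlap pattern} among these regions. By inclusion–exclusion,
\[
\Pr[\text{fail}] \;=\; \sum_{\emptyset\neq J}(-1)^{|J|+1}\bigl(1-\mu(\textstyle\bigcup_{j\in J}\Delta_j)\bigr)^{m},
\]
so the measures $\mu(\bigcup_{j\in J}\Delta_j)$ — which record precisely how the constraint sets intersect — are what decide whether $\delta$ is met. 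I would choose the point masses, $\varepsilon$, and $m$ so that whether a size-$m$ sample hits \emph{all} bad regions with probability $\ge 1-\delta$ is controlled by the combinatorial property of the instance (a small cover/hitting configuration existing versus not), yielding $\Pr[\text{fail}] < \delta$ in YES-instances and $\Pr[\text{fail}] > \delta$ in NO-instances.

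I would then pin down $\varepsilon,\delta,m$ as explicit functions of the instance parameters, verify by bounding the inclusion–exclusion sum that the failure probability lands strictly on the correct side of $\delta$ in each case, and conclude that a polynomial-time test for ``does $m$ suffice'' would solve the covering problem. Finally I would check that all numerical quantities have polynomial bit-length and that passing from the single distinguished target to the ``for every $h^\star\in\mathcal{H}$'' formulation does not weaken the reduction — the remaining targets can be made to satisfy their requirement trivially by symmetry or padding, so the full planner test collapses to the single-target question.

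The hard part will be the quantitative control of the inclusion–exclusion sum: I must engineer the overlaps among the disagreement regions so that the \emph{exact} failure probability is provably separated across $\delta$ by a margin that is nonzero yet representable in polynomially many bits, even though the sum ranges over exponentially many subsets $J$. Obtaining this sharp, instance-dependent separation — rather than a coarse asymptotic PAC bound, and while taming the gap between a \emph{random} sample and an \emph{adversarial} hitting set — is where the combinatorial gadget and the careful choice of $(\varepsilon,\delta,m)$ must do the real work.
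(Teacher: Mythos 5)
Your overall architecture matches the paper's: a single agent, a reduction from a covering problem (the paper uses \textsc{Set Cover}), bad hypotheses whose disagreement regions with the target encode the constraint sets, and the observation that the requirement fails exactly when some bad region receives no sample. But there is a genuine gap at the step you yourself flag as ``the hard part,'' and it is not a technicality that careful engineering will close --- it is the one idea the reduction cannot work without. With a uniform (or any fixed polynomial-bit) distribution and a confidence level $\delta$ bounded away from $1$, the probability that $m$ i.i.d.\ draws hit every bad region is governed by coupon-collector-type behavior and does not track the existence of a size-$m$ hitting set; there is no evident way to tune the point masses and $\varepsilon$ so that the inclusion--exclusion sum lands on opposite sides of $\delta$ exactly according to whether a small cover exists. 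Your plan leaves open precisely the random-versus-adversarial bridge.

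The paper closes this bridge with a single degenerate choice of the confidence parameter: it sets $\delta$ so that $1-\delta \le |\mathcal X|^{-m}$. Under the uniform distribution, any \emph{specific} sequence of $m$ draws occurs with probability at least $|\mathcal X|^{-m}$, so the requirement ``success probability at least $1-\delta$'' collapses to ``there \emph{exists} some $m$-point sample on which ERM is forced to output $h^\star$,'' which is exactly the question of whether a size-$m$ set cover exists (each domain point $x_j$ corresponds to a subset $E_j$, and sampling $x_j$ eliminates every $h_i$ with $u_i\in E_j$). Conversely, if no cover of size $\le m$ exists, the success probability is $0$. No inclusion--exclusion, no margin analysis, and no separate treatment of other targets beyond noting that $h^\star$ is the worst case. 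Without this (or an equivalent) device your proposal does not yield a proof; I would recommend adopting the degenerate-$\delta$ trick rather than attempting the quantitative separation you outline.
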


\subsection{Approximation via Linear Programming} \label{subsec:approx_via_lp}

Despite this hardness barrier, the structure of the PAC constraints admits an efficient relaxation.  Solving this LP takes polynomial time and returns a contribution vector whose total cost is within a logarithmic factor of the optimum. More specifically, for a finite class $\cH$, our task is to find a vector of sample counts $\mathbf m=(m_i)_{i\in\mathcal A}$ that—with probability at least $1-\delta$—forces ERM to discard every hypothesis \(h\) that is \emph{bad} for some agent:
\[
\exists\,i\in\mathcal A:\;
\cD_i\!\bigl(\{\,x : h(x)\neq h^\star(x)\,\}\bigr)\;\ge\;\varepsilon .
\]
For any ordered pair $(h_1,h_2)$ satisfying this condition, the probability that \emph{no} sample lands in the set $\{x:h_1(x)\neq h_2(x)\}$ is
\(
\prod_{i\in\mathcal A}
\bigl(1-\cD_i(\{h_1\neq h_2\})\bigr)^{m_i},
\)
\noindent
which is log-linear in $\mathbf m$.  We can therefore convert \cref{eq:social-obj} into an LP with polynomially many constraints, which can be solved efficiently. 

For infinite \(\mathcal{H}\), we provide an approximate solution by solving the LP over a finite cover \(\bar{\mathcal{H}} \subset \mathcal{H}\), whose size is polynomial in \(\frac{1}{\varepsilon}\) and \(\frac{1}{\delta}\) when the VC dimension \(d = \operatorname{VCdim}(\mathcal{H})\) is bounded. While the resulting solution ensures PAC constraints are satisfied for \(\bar{\mathcal{H}}\), it may not suffice for the full class \(\mathcal{H}\), as there may exist a hypothesis \(h \in \mathcal{H} \setminus \bar{\mathcal{H}}\) that is consistent but still incurs high error. Nevertheless, we show that scaling the solution by a factor of roughly \(d\) suffices to ensure the PAC objective is met for all of \(\mathcal{H}\).

Any server that implements this LP to coordinate contributions, as in \cref{sec:mechanism}, interacts with agents only through the information the LP requires, and two practical aspects are worth noting. First, the LP needs from each agent only a finite, explicitly specified vector probabilities, so the communication burden is well defined and finite.\footnote{Formally, for each pair $h_1,h_2\in\cH$ the planner needs, from each agent $i$, the probability that its data lie in the disagreement set $\{x : h_1(x)\neq h_2(x)\}$, that is, the quantity $\cD_i(\{x : h_1(x)\neq h_2(x)\})$. If $\cH$ is finite, this yields at most $|\cH|^2$ numbers per agent. If $\cH$ is infinite, \cref{app:approx-mts} replaces $\cH$ with a finite $\gamma$-cover $\bar\cH$ whose size is polynomial in $d=\operatorname{VCdim}(\cH)$.  These are summary statistics that can be estimated locally or on a shared unlabeled validation set, rather than raw examples.} Second, these quantities are summary distributional statistics that each agent can compute locally from its own data. Sharing only these statistics, rather than raw examples, ensures the server never receives individual training examples.

Formally, the next theorem summarizes the approximation guarantee of this LP relaxation; the full proof appears in \cref{app:approx-mts}. As an empirical check, \cref{app:finite-class-planner} validates this LP allocation on a finite hypothesis class by comparing to the true optimum \(\sum_i m_i^{\star}\) across varying \(|\cH|\).
\begin{restatable}[Approximation via Linear Programming]{theorem}{approxmts}\label{thm:approximate-mts}
Given any \(\cH\) and \(\varepsilon, \delta > 0\):
\begin{itemize}
    \item For finite \(\cH\), the LP over $(\cH,\epsilon,\delta)$ returns a \(\frac{\log(1/\delta) + \log |\cH|}{\log(1/\delta)}\)-approximate solution to \Cref{eq:social-obj}.

    \item  For infinite \(\cH\), running the LP over \((\bar{\cH},\epsilon, \delta')\) and multiplying it by $d+\log(1/\delta'')$ returns a
    $O(\frac{d^2(\log (c_{\text{max}}kd/(c_{\text{min}}\epsilon\delta))^2}{\log(1/\delta)})$-approximate
    solution to \Cref{eq:social-obj}, where $\bar \cH\subset \cH$ is a $\gamma$-cover of $\cH$, $\gamma=\Theta\!\bigl(c_{\text{min}}\varepsilon\delta/(c_{\text{max}}k(d+\log(1/\delta)))\bigr)$, \(d = \vcd(\cH)\), $\delta''= \frac{\delta}{4|\bar \cH|}$ and $\delta'=\frac{\delta}{8(d+\log(2|\bar \cH|/\delta))}$, $c_{\text{min}} = \min_{i\in[k]} c_i$ and $c_{\text{max}} = \max_{i\in[k]} c_i$.
\end{itemize}
\end{restatable}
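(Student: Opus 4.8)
The plan is to build a log-linear LP whose constraints are \emph{necessary} for the PAC objective—so its optimum lower-bounds $\OPT$—and then show that uniformly \emph{scaling} an LP solution restores sufficiency, converting the relaxation gap into an approximation guarantee. For finite $\cH$, observe that for every ordered \emph{bad} pair $(h_1,h_2)$, meaning $\max_{i}\cD_i(\{h_1\neq h_2\})\ge\varepsilon$, the probability that the pooled sample misses the disagreement region is $\prod_{i\in\cA}(1-\cD_i(\{h_1\neq h_2\}))^{m_i}$, so taking logarithms yields a linear constraint in $\mathbf m$. I would form the LP that minimizes $\mathbf c^\top\mathbf m$ subject to this survival probability being at most $\delta$ for every bad pair. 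The \emph{lower bound} is immediate: if any bad $h$ survived ERM with probability exceeding $\delta$ for some target $h^\star$, the requirement in \cref{def:pac-obj} would already fail, so every feasible profile of \cref{eq:social-obj} satisfies all LP constraints and the LP optimum is at most $\OPT$. The \emph{feasibility-after-scaling} step multiplies any LP-feasible $\mathbf m$ by $\rho=\tfrac{\log(1/\delta)+\log|\cH|}{\log(1/\delta)}$, which raises each per-pair survival probability to $\delta^{\rho}=\delta/|\cH|$; a union bound over the at most $|\cH|$ bad hypotheses relative to any fixed $h^\star$ then caps the total failure probability at $\delta$. Hence the scaled vector is feasible for \cref{eq:social-obj} at cost $\le\rho\cdot\OPT$, giving the finite-class ratio.

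For infinite $\cH$ I would first replace $\cH$ by a $\gamma$-cover $\bar\cH$, using the VC bound to control its size as $\log|\bar\cH|=O(d\log(1/\gamma))$ with $d=\vcd(\cH)$, and solve the finite LP over $(\bar\cH,\varepsilon,\delta')$. The difficulty is that the cover only certifies that the \emph{representative} disagreement regions are hit, while a hypothesis $h\in\cH\setminus\bar\cH$ can still be bad and survive ERM. The remedy is to multiply the cover solution by $t=d+\log(1/\delta'')$ and to split the role of $t$: the $\log(1/\delta'')$ portion supplies the union bound over the $|\bar\cH|$ cover regions, exactly as in the finite case, while the $d$ portion supplies an $\varepsilon$-net / VC argument. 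Concretely, boosting each region's expected number of hits to $\Omega(d+\log(1/\delta''))$ lets a Chernoff bound combined with the growth function $(e m_i/d)^{d}$ promote ``every cover region is hit'' to ``every measure-$\ge\varepsilon$ disagreement region of the \emph{full} class is hit,'' provided $\gamma$ is small enough that each true region is $\gamma$-approximated by a cover region. Once every bad region is hit, ERM discards every bad $h\in\cH$ and the PAC objective holds for all of $\cH$.

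It then remains to charge the blow-up to $\OPT$. Since $\bar\cH\subset\cH$, any profile feasible for $\cH$ is feasible for $\bar\cH$, so $\OPT_{\bar\cH}\le\OPT$; passing from confidence $\delta$ to $\delta'$ in the log-linear constraints costs a factor $\tfrac{\log(1/\delta')}{\log(1/\delta)}$ (scale a $\delta$-feasible solution to a $\delta'$-feasible one), and the explicit scaling contributes the factor $t=d+\log(1/\delta'')$. Substituting $\log|\bar\cH|=O(d\log(1/\gamma))$ and the prescribed $\gamma,\delta',\delta''$, so that $\log(1/\gamma),\log(1/\delta'),\log(1/\delta'')=O\!\bigl(\log(c_{\text{max}}kd/(c_{\text{min}}\varepsilon\delta))\bigr)$, collapses the product of these factors to the stated $O\!\bigl(d^2(\log(c_{\text{max}}kd/(c_{\text{min}}\varepsilon\delta)))^2/\log(1/\delta)\bigr)$ bound.

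The crux, and the step I expect to be the main obstacle, is the infinite-class lifting: turning the finite cover guarantee into a uniform $\varepsilon$-net statement over the \emph{heterogeneous} pooled sample, where the relevant region for a given $h$ has measure $\ge\varepsilon$ only under the distribution of the agent for whom $h$ is bad. The delicate balance is to choose $\gamma$ small enough that the $\gamma$-discretization never lets a genuinely bad hypothesis hide behind a cover region, yet large enough that $|\bar\cH|$—and hence both the multiplier $t$ and the $\delta'$-shift—stay polylogarithmic in the problem parameters, so the final ratio remains $O(d^2\,\polylog/\log(1/\delta))$.
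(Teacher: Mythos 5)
Your proposal follows essentially the same route as the paper: the identical log-linear LP and scaling argument for finite $\cH$ (you place the $1/|\cH|$ factor in the scaling step rather than in the LP constraints, which is equivalent), and the same cover-then-scale-by-$d+\log(1/\delta'')$ strategy for infinite $\cH$, with your Chernoff/growth-function lifting playing the role of the paper's localized PAC bound on the distributions restricted to the cover pairs' disagreement regions. The one ingredient your sketch omits is the justification for the $\delta$ factor inside $\gamma$ (together with the a priori restriction to $\|\mathbf m\|_1 = O\!\bigl(c_{\text{max}}k(d+\log(1/\delta))/(c_{\text{min}}\varepsilon)\bigr)$): the paper needs $\gamma$ this small so that with probability $1-\delta/2$ no sample lands in $\mathrm{DIS}(h^\star,\bar h^\star)$, keeping the target's own cover representative consistent, which is what licenses running the elimination argument relative to $\bar h^\star$ rather than $h^\star$.
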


\section{Mechanism Design with Approximate Solutions}
\label{sec:mechanism}
In \Cref{sec:coordination} we studied the planner’s computational problem under full information and control. Here we keep prescriptive control but drop full information and reintroduce strategy under coordination.  In the uncoordinated game of \Cref{sec:equilibria} agents act by choosing their own sample sizes. With the planner fixing sample counts, agents can only influence outcomes by misreporting their local distributions, which shifts the LP constraints from \Cref{sec:coordination}, and thus the computed contribution vector.\footnote{Enforcing prescribed contributions without accessing raw data raises a verification challenge. Practical approaches include cryptographic proofs and auditable contribution records  \citep{ma2024vpfl, kumar2025auditable}.} Building on that LP, in this section, we design a mechanism that is strategyproof and unique within a broad contribution-based class, blocking this manipulation and addressing the inefficiency in \Cref{sec:equilibria}. Together, these results highlight the contrast between uncoordinated and coordinated settings.

Specifically, we now regard the agents' local marginal data distributions as \emph{private information} that must be reported to the mechanism. Their per-sample costs $c_i$ remain known parameters.\footnote{For many applications, it is natural to treat per-sample costs as known or contractually specified: platform may post a fixed price per labeled example, meter and bill bandwidth or compute usage, or sign annotation contracts that set a per-label rate up front, so that $c_i$ is a parameter of the environment rather than private strategic information.} To alleviate the resulting incentive issues, we allow payments to the agents. The sequence of events, therefore, is as follows:
\begin{enumerate}[itemsep=1.5pt, topsep=0pt, parsep=0pt, partopsep=0pt]
    \item Agents report their local marginal data distributions, \(\bm{\mathcal{D}^r} =  (\mathcal{D}^r_1,\ldots,\mathcal{D}^r_k)\).
    \item The central planner computes a solution \(\mathbf{m}\) based on the reported distributions \(\bm{\mathcal{D}^r}\).
    \item Agents contribute according to \(\mathbf{m}\).
    \item The central planner pays each agent \(i\) an amount \(p_i\).
\end{enumerate}

We make the standard assumption that agent utilities are quasi-linear, that is, the utility of agent $i$ for contribution vector $\mathbf{m}$ and payment $p_i$ (from the mechanism to the agent) is $u_i(\mathbf{m})+p_i$. We say that a mechanism (which computes agent contributions and payments) is \emph{strategyproof} if an agent can never benefit from misreporting their local marginal distribution; in game-theoretic terms, it is a dominant strategy to report $\cD^r_i = \cD_i$. 

The challenge now is twofold: computation and incentives. In other words, the question is this: \emph{How can we \emph{tractably} compute contributions and payments such that the resulting mechanism is \emph{strategyproof?}}

\subsection{The Pay-What-You-Contribute Mechanism}

In our model, the answer to the foregoing question is surprisingly immediate. To compute the contribution vector $\mathbf{m}$, we use the approximation algorithm of~\cref{thm:approximate-mts}. For payments, we use the simple \emph{pay-what-you-contribute (PWYC)} scheme, that is, compensate each agent for its contribution, up to a constant:
\begin{equation}
    p_i(\mathbf{m}) = c_i \cdot m_i +C_i\,,\label{eq:linearpayment}
\end{equation}
for constants $C_1,\ldots,C_k$. 

Why is PWYC strategyproof? The utility of agent $i$ when reporting truthfully is
\[
u_i(\mathbf{m}) = a_i^{\varepsilon,\delta}(\bm m) - c_i\cdot m_i + c_i\cdot m_i + C_i= 1 + C_i,
\]
as its learning threshold is satisfied by the contribution vector $\mathbf{m}$ computed by the approximation algorithm. Note that this is the maximum possible utility under this mechanism: for any $\mathbf{m}'$, the utility of agent $i$ is either $1+C_i$ or $C_i$, depending on whether its learning threshold is satisfied by $\mathbf{m}'$. 

\subsection{Alternatives to Pay What You Contribute?}

While the PWYC mechanism is strategyproof, one may wonder whether it is possible to design more sophisticated mechanisms with the goal of, for example, minimizing payments. A natural candidate is the classic Vickrey-Clarke-Groves (VCG) mechanism, which in our setting computes the optimal solution \(\mathbf{m}^\textrm{OPT}\), asks agents to contribute according to \(\mathbf{m}^\textrm{OPT}\), and then pays each agent \(i\) 
\begin{equation}
    p_i = \sum_{j \neq i} u_j(\mathbf{m}^\textrm{OPT})+q_i(\bm{\cD}_{-i}) = k - 1 - \sum_{j \neq i} c_j \cdot m^\textrm{OPT}_j +q_i(\bm{\cD}^r_{-i})\,,\label{eq:vcg}
\end{equation}
where $q_i(\bm{\cD}_{-i})$ is a term independent of the report $\cD_i$ of agent $i$.
By standard arguments~\citep{Nis07}, the VCG mechanism ensures that each agent reports truthfully.

An obstacle to using VCG directly is that computing the optimal contribution vector is computationally hard (\Cref{thm:nphard}). In the \emph{algorithmic mechanism design}~\citep{NR01} literature, however, there are various mechanisms that overcome computational hardness by augmenting approximation algorithms with clever payment schemes, including ones inspired by VCG~\citep{LOS02,Dob07}. Is there such a rich mechanism design space in our setting? We give a partial negative answer to this question, showing that the PWYC mechanism is, in a qualified sense, unique.

This uniqueness is unfortunate, since the absence of a budget-balanced mechanism naturally raises the concern that shifting costs to the planner could, in a sense, diminish some of the benefits of FL training in the first place. We view the uniqueness of PWYC as a substantive, if somewhat negative, insight: if one insists on dominant-strategy truthfulness and approximate efficiency in a heterogeneous FL game, then some external budget or cost sharing is unavoidable, much like in other public-goods settings where reimbursement is often needed to align incentives. Even in standard mechanism design settings, including the celebrated VCG mechanism, exact budget balance is generally impossible, so the planner must make net payments to align incentives. Importantly, PWYC does not force the planner to shoulder all costs: the constants $C_i$ in \cref{eq:linearpayment} can be set to recover part of the spend from participants or downstream users while preserving strategyproofness.

We start by defining a class of approximate algorithms and a class of ``easy-to-compute'' payment rules.
We first introduce a \emph{local obliviousness} property of contribution solutions, which helps distinguish approximate solutions from exact optima.

\begin{definition}[Locally Oblivious Approximations]
Given an approximation algorithm \(\text{APPROX}\) and a solution \(\mathbf{m}\), we say that the algorithm is \emph{locally oblivious} at \(\mathbf{m}\) if, for any neighbor \(\mathbf{m}'\) with \(\|\mathbf{m}' - \mathbf{m}\|_1 = 1\), there exist distributions \(\bm{\cD}=(\cD_1, \ldots, \cD_k)\) and \(\bm{\cD}'=(\cD_1', \ldots, \cD_k')\) such that:
\begin{itemize}
    \item \(\mathbf{m} = \text{APPROX}(\bm{\cD})\) and \(\mathbf{m}' = \text{APPROX}(\cD_i', \bm{\cD}_{-i})\) for all \(i \in [k]\),
    \item Both \(\mathbf{m}\) and \(\mathbf{m}'\) are feasible for \(\bm{\cD}\) and for each \((\cD_i', \bm{\cD}_{-i})\), for all \(i \in [k]\).
\end{itemize}
\end{definition}
Local obliviousness highlights a key difference between approximate solutions and exact optima. Intuitively, it reflects a certain slack in approximation: for any solution \(\mathbf m\) and any neighbor, we can construct pairs of distribution vectors where the approximation algorithm outputs \(\mathbf{m}\) and its neighbor respectively, and both solutions remain feasible across all these instances. For example, we show that the approximation algorithm introduced in \Cref{thm:approximate-mts} is locally oblivious at all \(\mathbf{m}\) satisfying \(m_1, m_2 \ge 2|\cH|\log|\cH|\) in the two-agent setting.

\begin{restatable}{lemma}{approxpacarea}\label{lmm:approx-pac-area}
For any $H$ greater than a universal constant $C$, there exists a PAC learning instance of $(\cH,\varepsilon,\delta)$ in the two-agent setting with $|\cH| = H$ such that the approximation algorithm introduced in \cref{thm:approximate-mts} is oblivious at all $\mathbf{m}$ with $m_1,m_2 \ge 2|\cH|\log|\cH|$.   
\end{restatable}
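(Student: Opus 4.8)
The plan is to construct an explicit two-agent PAC instance and exhibit, for each candidate solution $\mathbf{m}$ and each of its $\ell_1$-neighbors $\mathbf{m}'$, the two distribution vectors $\bm{\cD},\bm{\cD}'$ required by the definition of local obliviousness. First I would fix a construction where the relevant disagreement masses $\cD_i(\{h_1\neq h_2\})$ for pairs in $\cH$ take a controlled range of values, so that the LP of \cref{thm:approximate-mts} — whose constraints are the log-linear conditions $\prod_{i}(1-\cD_i(\{h_1\neq h_2\}))^{m_i}\le \delta$ — can be made to output any prescribed count in the target regime $m_1,m_2\ge 2|\cH|\log|\cH|$ by tuning the reported masses. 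The key observation to exploit is that the LP relaxation carries genuine slack relative to the true PAC requirement: in the regime where each $m_i$ is at least $2|\cH|\log|\cH|$, the per-pair covering constraints are satisfied with a comfortable margin, so a unit change in one coordinate does not violate feasibility for any pair.

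The main steps, in order, are: (i) specify $\cH$ with $|\cH|=H$ and the instance geometry (instance space, target hypothesis, and the family of disagreement sets) so that the disagreement masses are free parameters I can set; (ii) compute the LP's output as a function of these masses, showing that the map from reported masses to the optimal integer contribution vector can realize both $\mathbf{m}$ and a chosen neighbor $\mathbf{m}'$ — i.e.\ construct $\bm{\cD}$ with $\mathbf{m}=\text{APPROX}(\bm{\cD})$ and, for each $i$, a perturbation $\cD_i'$ with $\mathbf{m}'=\text{APPROX}(\cD_i',\bm{\cD}_{-i})$; (iii) verify the feasibility clause, namely that both $\mathbf{m}$ and $\mathbf{m}'$ meet the PAC constraint \cref{def:pac-obj} under all the constructed distribution vectors. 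Step (iii) leans on the slack: since $m_1,m_2\ge 2|\cH|\log|\cH|$, a union bound over the at most $O(|\cH|^2)$ relevant hypothesis pairs keeps the total failure probability below $\delta$ even after a single-sample perturbation and a small change to one agent's masses.

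I expect the main obstacle to be step (ii): controlling the \emph{output} of the approximation algorithm. Local obliviousness requires that the algorithm output exactly $\mathbf{m}$ on $\bm{\cD}$ and exactly the neighbor $\mathbf{m}'$ after perturbing any single agent's report, which means I must understand how the LP's optimal vertex (and its rounding to integer counts) moves as one coordinate of the constraint matrix is varied. The delicate point is that the LP minimizes $\mathbf{c}^\top\mathbf{m}$ subject to the log-linear covering constraints, so I must choose the masses and the costs $c_1,c_2$ so that the binding constraints pin down the optimum at the desired lattice point and so that an infinitesimal shift of one agent's mass moves the optimizer by exactly one unit in the intended coordinate. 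I would handle this by designing the instance so that a single pair $(h_1,h_2)$ is the active (binding) constraint governing each coordinate, reducing the optimizer's behavior to a one-dimensional threshold that I can place precisely; the requirement $m_1,m_2\ge 2|\cH|\log|\cH|$ then guarantees there is room both above and below the threshold to realize any neighbor while preserving feasibility for the full class.
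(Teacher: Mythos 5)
Your plan matches the paper's proof essentially step for step: the paper takes $\cH$ to be the $H-1$ singletons plus the all-negative hypothesis, concentrates small masses $p_1,p_2,q_1,q_2$ on two designated points so that only the pairs $(h_0,h_1)$ and $(h_0,h_2)$ are binding, inverts those two equality constraints to realize any prescribed $\mathbf m$ and each neighbor $\mathbf m'$ (perturbing one agent's masses at a time), and uses the $\log H$ slack between the LP's per-pair budget $\delta/H$ and what the union bound actually requires to certify feasibility of both $\mathbf m$ and $\mathbf m'$. The one caveat is that your outline defers the explicit inversion (the concrete choices of $p_1,p_2,q_1,q_2$ and the perturbed $p_1',p_2'$), which is precisely where the hypothesis $m_1,m_2\ge 2|\cH|\log|\cH|$ is used to keep the perturbed masses in the admissible range and the non-binding constraints loose.
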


%
In general, a payment mechanism maps the reported distributions \(\bm{\mathcal{D}}^r\) to a payment vector \(\bp\). However, it is often unclear how to effectively utilize the reported distributions directly. To address this, we consider a class of mechanisms that are easy to compute in practice—those that depend on the distributions only through the resulting contribution solution \(\mathbf{m}\).

\begin{definition}[Contribution-Based Mechanisms]
\label{def:contribution-based}
Given an algorithm $\text{APPROX}$, a payment mechanism \(\bp\) is \emph{contribution-based} if there exists
\(\mathbf{f}: \mathbb{N}^k \to \mathbb{R}^k\) and $\mathbf{q}$ such that
\[
p_i(\bm{\mathcal{D}}^r) = f_i\bigl(\text{APPROX}(\bm{\mathcal{D}}^r)\bigr) + q_i(\bm{\cD}^r_{-i})\,,
\]
Thus agent \(i\)'s own report affects \(p_i\) only via the computed contribution vector.

\end{definition}

We now show that  PWYC is the only strategyproof contribution-based payment mechanism when approximate solutions are used for two agents (see proof in \cref{appx:mechanism-details}):

\begin{theorem}\label{thm:IC-linear-payment}
For any strategyproof contribution-based payment mechanism $\mathbf{f}$, approximation algorithm $\text{APPROX}$, and connected $M\subset \NN^k$, if $\text{APPROX}$ is oblivious at all $\mathbf{m}\in M$, 
there exist constants $C_1,C_2$ such that for all $\mathbf{m}\in M$:
\[f_i(\mathbf{m}) = c_i\cdot m_i + C_i\,,\forall i=1,2\\,\]
\end{theorem}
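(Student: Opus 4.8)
The plan is to exploit the structure of strategyproofness combined with the obliviousness property to pin down $f_i$ on each connected component of $M$. The key observation is that strategyproofness constrains how an agent's payment can change when its report changes the computed contribution vector $\mathbf{m}$. Fix an agent $i$ and consider two neighboring solutions $\mathbf{m}, \mathbf{m}' \in M$ with $\|\mathbf{m}'-\mathbf{m}\|_1 = 1$; by obliviousness there exist distribution profiles $\bm{\cD}, \bm{\cD}'$ such that $\text{APPROX}$ outputs $\mathbf{m}$ under $\bm{\cD}$ and $\mathbf{m}'$ under $(\cD_i', \bm{\cD}_{-i})$, with \emph{both} solutions feasible for \emph{both} reports. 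This is exactly the setting in which an agent could profitably deviate unless the payments are aligned.

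First I would derive the core incentive inequality. Under report $\cD_i$, agent $i$ contributes $m_i$, achieves its target (feasibility), and collects utility $1 - c_i m_i + f_i(\mathbf{m}) + q_i(\bm{\cD}_{-i})$. If agent $i$ instead reports $\cD_i'$, the algorithm outputs $\mathbf{m}'$, the agent contributes $m_i'$, and—crucially, since $\mathbf{m}'$ is feasible for the \emph{true} distribution $\cD_i$—still achieves its target, collecting $1 - c_i m_i' + f_i(\mathbf{m}') + q_i(\bm{\cD}_{-i})$. Because $q_i$ does not depend on $i$'s report, strategyproofness (no gain from the deviation) yields $-c_i m_i + f_i(\mathbf{m}) \ge -c_i m_i' + f_i(\mathbf{m}')$. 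Running the symmetric argument from $\mathbf{m}'$ back to $\mathbf{m}$ (valid because obliviousness supplies both directions) gives the reverse inequality, so
\[
f_i(\mathbf{m}) - f_i(\mathbf{m}') = c_i\,(m_i - m_i')\,.
\]
Since $\mathbf{m}$ and $\mathbf{m}'$ differ in $\ell_1$-norm by exactly one, this says $f_i$ increases by $c_i$ per unit increase in $m_i$, and—when the differing coordinate is $j \neq i$—that $f_i(\mathbf{m}) = f_i(\mathbf{m}')$, i.e.\ $f_i$ is insensitive to other agents' contributions.

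Next I would integrate this local relation along paths. Because $M$ is connected (in the neighbor graph induced by unit $\ell_1$ steps), any two points are joined by a sequence of neighbors, and the telescoping sum of the per-step identities shows $f_i(\mathbf{m}) - c_i m_i$ is constant along any such path, hence constant on all of $M$. Denoting this constant $C_i$ gives $f_i(\mathbf{m}) = c_i m_i + C_i$ for every $\mathbf{m} \in M$, which is precisely the PWYC form. Specializing to $k=2$ yields the stated $C_1, C_2$.

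\textbf{Main obstacle.} The delicate point is ensuring the deviation argument is legitimate, namely that reporting $\cD_i'$ while \emph{truly} having $\cD_i$ still satisfies agent $i$'s accuracy target—otherwise the deviating utility drops by $1$ and the clean inequality breaks. This is exactly why the obliviousness definition insists that $\mathbf{m}'$ be feasible for $(\cD_i', \bm{\cD}_{-i})$ \emph{and} that we can realize $\mathbf{m}$ as the output under the true profile; I must verify that the feasibility bookkeeping lines up so that both the forward and backward deviations keep the agent's indicator $a_i^{\varepsilon,\delta}$ equal to $1$. A secondary subtlety is handling the two cases of which coordinate changes ($i$ versus $j\neq i$) uniformly, and confirming that path-connectivity in $M$ genuinely lets the constant propagate without running outside the oblivious region.
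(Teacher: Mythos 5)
Your proposal is correct and follows essentially the same argument as the paper's proof: obliviousness supplies the two distribution profiles under which agent $i$'s unilateral misreport toggles the output between neighboring $\mathbf{m}$ and $\mathbf{m}'$ while keeping both feasible, the two directions of strategyproofness force $f_i(\mathbf{m}) - c_i m_i = f_i(\mathbf{m}') - c_i m_i'$, and connectivity of $M$ propagates the constant. The paper states this far more tersely, but your added bookkeeping (feasibility keeping $a_i^{\varepsilon,\delta}=1$ in both deviations, and the uniform treatment of the $j\neq i$ coordinate) is exactly the right justification for the steps the paper leaves implicit.
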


\begin{corollary}\label{crl:linear-payment-pac}

For any $H$ greater than a universal constant $C$, there exists a PAC learning instance of $(\cH,\varepsilon,\delta)$ in the two-agent setting with $|\cH| = H$ such that when applying the approximation algorithm introduced in \cref{thm:approximate-mts} to compute the contribution solution, the strategyproof contribution-based payment $\mathbf{f}$ must satisfy
\[f_i(\mathbf{m}) = c_i\cdot m_i + C_i\,,\forall i=1,2\\,\]
for all $\mathbf{m}$ with $m_1,m_2 \ge 2|\cH|\log|\cH|$ for some constants $C_1,C_2$.
\end{corollary}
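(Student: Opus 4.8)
The plan is to obtain this corollary as a direct specialization of \Cref{thm:IC-linear-payment}, feeding it the oblivious instance supplied by \Cref{lmm:approx-pac-area}. Both ingredients are already established: \Cref{lmm:approx-pac-area} produces a concrete two-agent PAC instance on which the approximation algorithm of \Cref{thm:approximate-mts} is oblivious over a prescribed region, and \Cref{thm:IC-linear-payment} converts obliviousness on a connected region into the linear-payment conclusion. The corollary is essentially the composition of these two facts, so the real work is to confirm that their hypotheses line up, in particular that the obliviousness region from the lemma is a connected set eligible for the theorem.

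First I would invoke \Cref{lmm:approx-pac-area} with the given $H$ greater than the universal constant $C$: this fixes a two-agent instance $(\cH,\varepsilon,\delta)$ with $|\cH| = H$ on which the algorithm $\text{APPROX}$ from \Cref{thm:approximate-mts} is oblivious at every contribution vector $\mathbf{m}$ with $m_1,m_2 \ge 2|\cH|\log|\cH|$. Writing $T := 2|\cH|\log|\cH|$, I then define
\[
M \;:=\; \{\, \mathbf{m} \in \NN^2 \;:\; m_1 \ge T,\; m_2 \ge T \,\}.
\]
By construction $\text{APPROX}$ is oblivious at every $\mathbf{m}\in M$, so the obliviousness hypothesis of \Cref{thm:IC-linear-payment} holds on $M$.

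The only substantive step is to verify that $M$ is connected in the sense used by \Cref{thm:IC-linear-payment}, namely that any two points of $M$ are joined by a path of unit $\ell_1$-steps (neighbors $\mathbf{m}'$ with $\|\mathbf{m}'-\mathbf{m}\|_1 = 1$) lying entirely inside $M$. This is immediate because $M$ is an axis-aligned box, a product of two half-lines in $\NN$: given $\mathbf{m},\tilde{\mathbf{m}}\in M$, I move the first coordinate one unit at a time from $m_1$ to $\tilde m_1$, then the second from $m_2$ to $\tilde m_2$, and every intermediate vector keeps both coordinates at least $T$, hence stays in $M$. Thus the neighbor graph restricted to $M$ is connected.

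With obliviousness on the connected set $M$ in hand, \Cref{thm:IC-linear-payment} applies directly in the case $k=2$: any strategyproof contribution-based payment $\mathbf{f}$ must satisfy $f_i(\mathbf{m}) = c_i m_i + C_i$ for both $i=1,2$ and all $\mathbf{m}\in M$, for some constants $C_1,C_2$, which is exactly the claimed conclusion. I expect no genuine difficulty here beyond this bookkeeping; the main (and essentially only) obstacle is ensuring the obliviousness region delivered by \Cref{lmm:approx-pac-area} coincides with a connected region admissible for \Cref{thm:IC-linear-payment}, and once $M$ is recognized as a connected box the corollary follows with no further computation.
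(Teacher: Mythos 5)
Your proposal is correct and matches the paper's own argument exactly: the paper derives \cref{crl:linear-payment-pac} by combining \cref{lmm:approx-pac-area} with \cref{thm:IC-linear-payment}, just as you do. Your explicit verification that the box $\{\mathbf{m}\in\NN^2 : m_1,m_2 \ge 2|\cH|\log|\cH|\}$ is connected under unit $\ell_1$-steps is a small but welcome addition that the paper leaves implicit.
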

\cref{crl:linear-payment-pac} follows by combining \Cref{thm:IC-linear-payment} with \cref{lmm:approx-pac-area}. \Cref{appx:expected} extends the same approximation to a broader class of objectives.

\section{Discussion}\label{sec:discussion}
We conclude by discussing the implications of our results, examining key modeling assumptions, and outlining connections to broader theory.

\paragraph{Beyond the PAC accuracy objective.}
The $(\varepsilon,\delta)$-guarantee controls the \emph{tail}: with probability at least $1-\delta$ the generalization error does not exceed $\varepsilon$.  A complementary member of the same threshold family bounds the \emph{expected} error by $\varepsilon$, trading worst-case assurance for an average-case criterion that may better reflect practical risk tolerance.  Most of our analysis carries over unchanged: \cref{appx:expected_approx} develops approximation schemes for computing optimal contribution vectors under expected error, and \cref{app:local_obv_expected} proves that these solutions continue to satisfy the local obliviousness property.

Taking a broader view of utilities, we focus on \emph{threshold payoffs}. With quasi-linear utility, maximizing utility equals minimizing cost subject to the threshold. This replaces the intractable dependence of each agent’s loss on its contribution with a tractable optimization. Future work can explore richer models that trade off cost and payoff more flexibly.
\setlength{\parskip}{1pt}

\paragraph{Centralized ERM as a benchmark.} In this work, the ERM solution on the pooled dataset $S$ is used only as a theoretical benchmark, not as an implementation assumption. The raison d’être of FL is to avoid aggregating raw data on a central server, so directly minimizing the joint empirical loss over all local data imagines having everything in one place. In practice, algorithms such as FedAvg approximate the pooled-ERM solution via local updates and model aggregation, without ever materializing a pooled dataset. In our analysis, this centralized ERM solution serves as the reference point: it defines the PAC constraints and certifies that, if one were to train on the pooled draw of size $m$, the resulting model would satisfy each agent’s $(\varepsilon_i,\delta)$ requirement. This abstraction lets us ignore optimization dynamics and focus on incentives and sample complexity, while any FL protocol that gets sufficiently close to this benchmark inherits the same qualitative guarantees. Using centralized ERM as an upper-bound reference is standard in FL theory~\citep{mohri_agnostic_2019, li_fair_2020, blum2021one}.
\setlength{\parskip}{1pt}

\paragraph{Common accuracy threshold.}
For clarity, we assume a common accuracy target $\varepsilon$ across agents. This simplifying assumption streamlines notation and highlights the structural properties of the contribution game. However, our results do not rely on uniformity. Each agent’s objective can be parameterized by a distinct $\varepsilon_i$, and all definitions, equilibrium analyses, and approximation guarantees extend naturally to this heterogeneous setting. 
\setlength{\parskip}{1pt}

\paragraph{Linear costs.}
Linear per-sample costs simplify analysis and are common in the literature, but they can be restrictive and not strictly necessary. Permitting any convex, non-decreasing cost function captures realistic scenarios with increasing marginal cost—for instance, later data points may be more expensive to collect. The strategic conclusions are unchanged: a NE still exists and pure equilibria need not. Meanwhile, the \emph{linear} program in our approximation algorithm becomes a convex program that remains tractable. Thus, the incentive mechanism we derive continues to work even when agents’ data-generation costs rise with effort, at the price of solving slightly more involved convex optimizations.

\newpage
\bibliographystyle{iclr2026_conference.bst} 
\bibliography{ref}

\newpage
\appendix

\section{Empirical Validation and Illustrative Simulations}
\label{app:empirical}

\paragraph{Overview.}
Our main contributions are theoretical. This appendix complements them with two empirical studies that illustrate the modeling assumptions and the practical significance of the planner and mechanism. First, we show that data are not exchangeable, since an agent benefits more from data drawn from its own distribution. Second, we validate the planner’s LP allocation on a finite hypothesis class by comparing its total cost to the true optimum found by search.

\paragraph{Dataset Selection and Motivation.}
We build on the LEAF benchmark~\citep{LEAF}, which provides realistic federated datasets naturally partitioned by user. We selected two datasets, \textbf{FEMNIST} and \textbf{Shakespeare}, to cover both vision and textual tasks. These specific datasets were chosen because they have the highest average number of datapoints per user among the LEAF collection \citep{LEAF}. FEMNIST, built on Extended MNIST \citep{lecun1998mnist, cohen2017emnist}, consists of handwritten character images partitioned by writer. Shakespeare is constructed from \textit{The Complete Works of William Shakespeare} \citep{shakespeare1989william, mcmahan2017communication}, partitioned by speaking role, with each role in each play treated as a distinct agent.

\paragraph{Dataset Construction.}
We construct two-agent subsets from each dataset by selecting the two users with the most data points. For FEMNIST, these are writers $\texttt{f0261\_06}$ and $\texttt{f0289\_10}$, jointly contributing $1,047$ samples. For Shakespeare, we select the roles Hamlet (from \textit{Hamlet}) and Iago (from \textit{Othello}), together providing $112,116$ samples. Additionally, we create a variant scenario with one top agent and a random convex combination of five other agents forming the second agent. We illustrate qualitative differences between the FEMNIST agents in \cref{fig:femnist-chars}.

\begin{figure}[h]
\centering
\includegraphics[width=\textwidth]{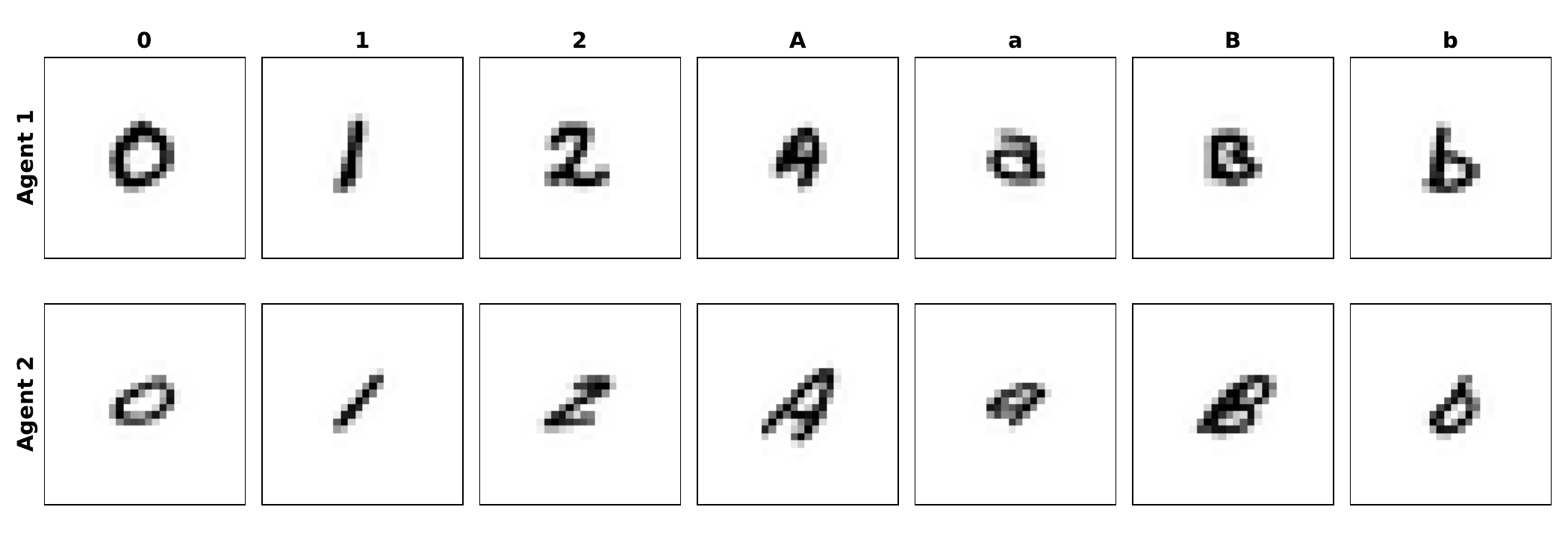}
\caption{Samples from selected character classes for two FEMNIST agents, illustrating distinct handwriting styles.}
\label{fig:femnist-chars}
\end{figure}

\subsection{Effect of Data Composition on Agent-Specific Performance}
\label{sec:data-composition-performance}
\label{sec:experiments}

\paragraph{Experimental Setup.}
For varying total data size $m$ and fraction $\lambda \in [0,1]$ from Agent 1, we construct datasets $D^{(m,\lambda)}$ comprising $\lambda m$ samples from Agent 1 and $(1-\lambda)m$ samples from Agent 2. Models trained on these datasets are evaluated separately on each agent’s test set. We use standard cross-entropy loss and the SOAP optimizer \citep{vyas2024soap}. To capture performance variability, experiments are repeated 100 times with different random seeds. See \cref{tab:twoagent-params} for detailed parameters. We employed task-appropriate model architectures aligned with prior federated learning works \citep{mcmahan2017communication, LEAF, murhekar2024incentives}. For FEMNIST (image classification), the model is a lightweight CNN. For Shakespeare (next-character prediction), we use a recurrent neural network with an LSTM backbone. 

\begin{table}[h]
\centering
\small
\caption{Summary of experimental parameters.}

\label{tab:twoagent-params}
\begin{tabular}{lcccccc}
\toprule
Dataset & Agent 1 & Agent 2 & \# Seeds & $m$ values & Learning rate & Epochs \\
\midrule
FEMNIST & \texttt{f0261\_06} & \texttt{f0289\_10} & 100 & $10-450$ & 0.01 & 50 \\
Shakespeare & Hamlet & Iago & 100 & $10,000-45,000$ & 0.01 & 25 \\
\bottomrule
\end{tabular}\vspace{4pt}
\end{table}

\begin{figure}[htbp]
    \centering
    \begin{subfigure}[b]{0.48\textwidth}
        \includegraphics[width=0.95\textwidth]{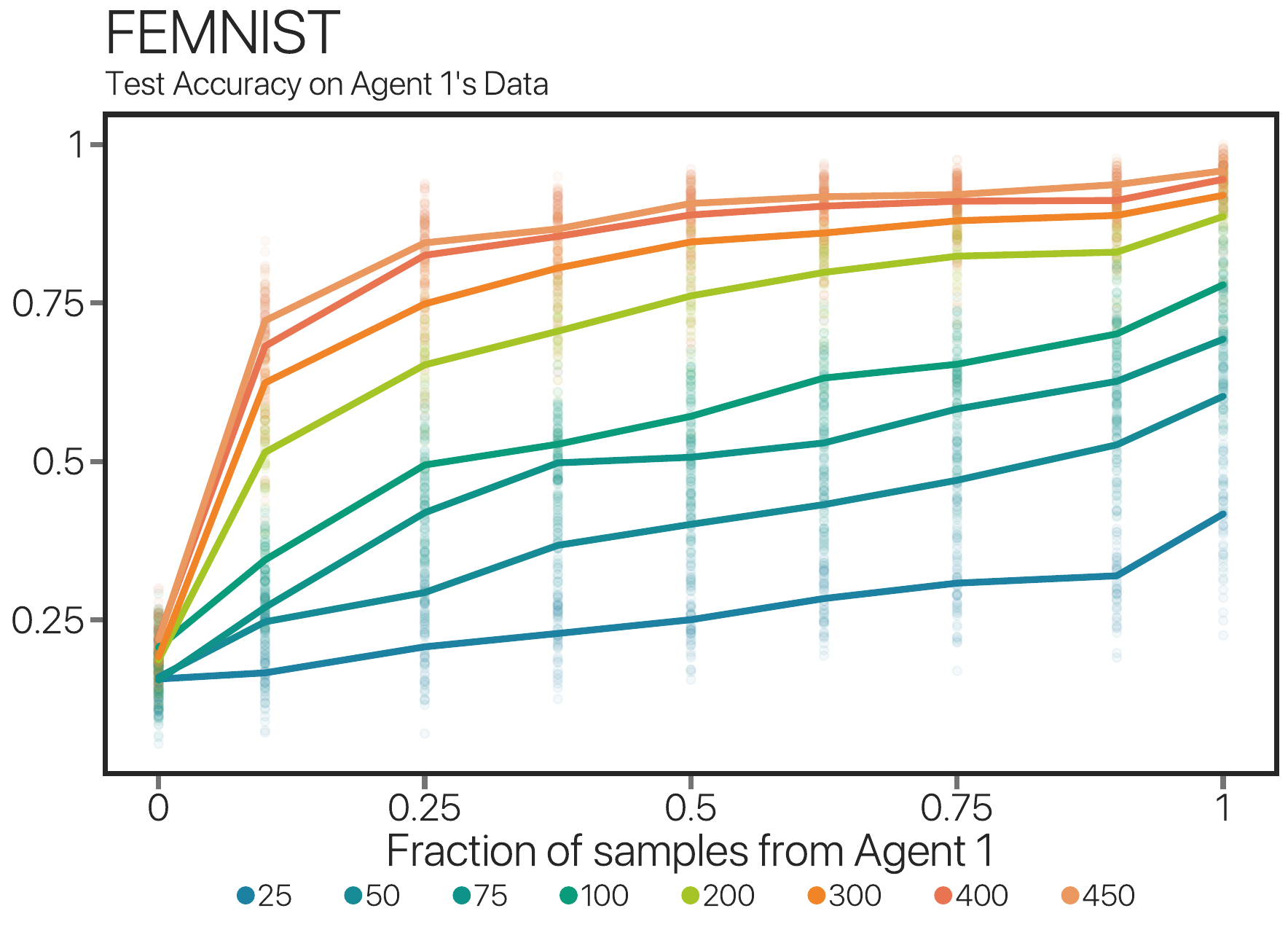}
        \label{fig:agent_acc_femnist}
    \end{subfigure}
    \hfill
    \begin{subfigure}[b]{0.47\textwidth}
        \includegraphics[width=0.95\textwidth]{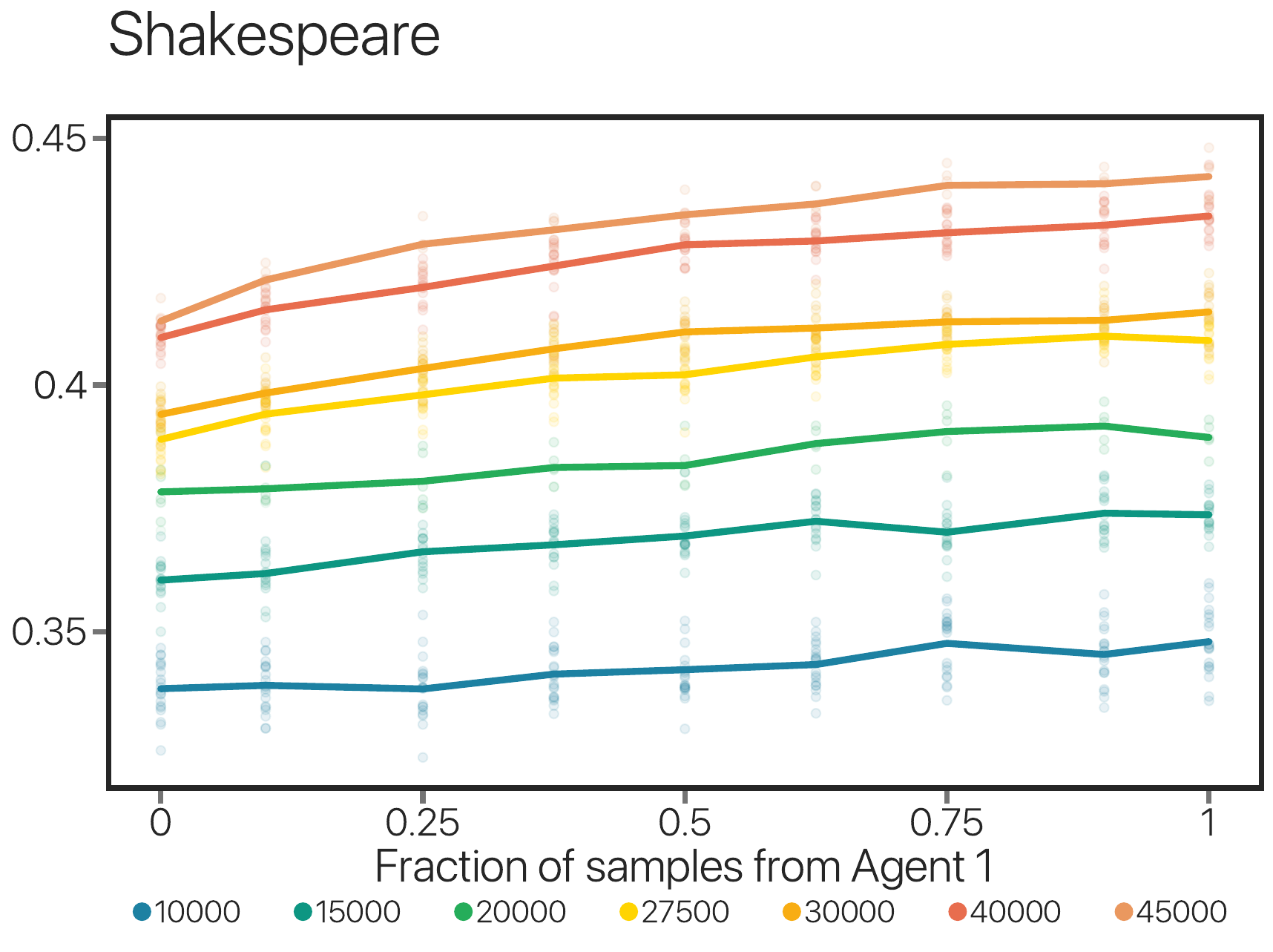}
        \label{fig:agent_acc_shakespeare}
    \end{subfigure}
    \caption{Agent 1's test accuracy as a function of the fraction of samples contributed from its own distribution (Agent 1), compared against random convex combinations from other agents (Agent 2). Each curve represents different total dataset sizes ($m$). The monotonic trend supports the weak–monotonicity \cref{assumption:monotone}.}
    
    \label{fig:agent_accuracy_comparison}
\end{figure}

\paragraph{Results and Discussion.} These experiments confirm the intuitive expectations about data volume and personalization, and quantify their effects. (i)  For any fixed share \(\lambda\), increasing the budget \(m\) lowers both agents’ losses—more data helps everyone (\cref{assumption:monotone}).  
(ii)  For any fixed \(m\), raising \(\lambda\) cuts Agent 1’s loss: when the training set includes a higher proportion of one writer’s style, the model is better tuned to that style and thus generalizes better on that agent’s test set.  
These trends mirror prior personalization findings in federated learning and domain adaptation~\citep{hsu2019measuring}.

\subsection{Planner validation on a finite hypothesis class}
\label{app:finite-class-planner}
We empirically validate the planner’s LP allocation from \cref{subsec:approx_via_lp} by comparing its total prescribed samples to the true optimum obtained by discrete search over integer allocations. We report the ratio
\[
r \;=\; \frac{\sum_i m_i^{\mathrm{LP}}}{\sum_i m_i^{\star}}
\]
under uniform per-sample costs \(c_i=1\).

\paragraph{Finite-class construction.}
We focus on FEMNIST and construct a finite hypothesis class \(\mathcal H\) by sampling checkpoints from a compact CNN with two convolutional blocks and a linear classifier. For each agent, we estimate pairwise disagreements \(\cD_i(\{h_1 \neq h_2\})\) for all \(h_1,h_2\in\mathcal H\) using unlabeled data, i.e., the fraction of examples on which the two checkpoints predict different labels. These estimates instantiate the constraints in \cref{subsec:approx_via_lp}, yielding the LP-based contribution vector \(\mathbf m^{\mathrm{LP}}\).

\paragraph{Experimental setup.}
We assume \(c_i=1\) for all agents. Given \(\mathbf m^{\mathrm{LP}}\), we perform an independent discrete search over integer allocations and, by repeated evaluation of the checkpoints in \(\mathcal H\), find the smallest \(\mathbf m^{\star}\) that satisfies every agent’s \((\varepsilon,\delta)\) target. We sweep the size \(|\mathcal H|\) of the sampled class and record \(r\).

\paragraph{Results.}
The observed values of \(r\) across \(|\mathcal H|\) are reported in \cref{tab:finite-class-results}. The LP allocation achieves logarithmic-type approximation factors consistent with \cref{thm:approximate-mts}.

\begin{table}[t]
\centering
\small
\caption{LP total versus the optimal total on FEMNIST for a finite hypothesis class \(\mathcal H\).}

\begin{tabular}{r r}
\toprule
\(|\mathcal H|\) & \(r = \frac{\sum_i m_i^{\mathrm{LP}}}{\sum_i m_i^{\star}}\) \\
\midrule
5   & 1.67 \\
10  & 3.19 \\
15  & 3.94 \\
20  & 3.57 \\
25  & 4.24 \\
27  & 3.99 \\
50  & 4.20 \\
\bottomrule
\end{tabular}

\label{tab:finite-class-results}
\vspace{6pt}
\end{table}
\section{Binary Payoffs and Quasi-Linear Utilities}
\label{app:binary-utilities}
In this section we explain why our binary “meets target or not” payoff should be viewed as a convenient normalization rather than a restrictive assumption. We model each agent’s payoff as \(1\) if its PAC requirement holds and \(0\) otherwise, but show that this threshold formulation is equivalent, via a standard Lagrangian/KKT argument, to optimizing a much broader class of quasi-linear utilities that trade off accuracy against cost. In particular, any agent with smooth, monotone accuracy benefits and convex costs behaves as if it had such a hard accuracy requirement, so the binary model compactly represents a wide family of underlying preferences.

Our agent‑level decision problem is a cost–minimization subject to meeting its accuracy requirement:
\[
\min_{m_i\in\mathbb N}\ m_i
\quad\text{s.t.}\quad
a_i^{\varepsilon,\delta}(\bm m)=1 .
\]

Under standard regularity conditions (feasibility of the constraint, monotonicity and differentiability of the underlying accuracy function, and convexity of the cost), the Karush–Kuhn–Tucker (KKT) conditions imply that any optimal solution to this constrained problem can be obtained by optimizing an equivalent unconstrained objective. Let \(A_i(m_i)\) denote a smooth, increasing performance metric for agent \(i\) (for example, one minus error) such that \(A_i(m_i^\star)=1\) at the optimum, and let the per-sample cost be \(c_i > 0\). The Lagrangian for agent \(i\)’s problem is
\[
\mathcal{L}_i(m_i,\lambda_i)
= c_i m_i + \lambda_i \big(1 - A_i(m_i)\big),
\]
with multiplier \(\lambda_i \ge 0\) on the accuracy constraint. At any optimal solution \(m_i^\star\), the KKT stationarity condition gives
\[
\frac{\partial \mathcal{L}_i}{\partial m_i}(m_i^\star)
= c_i - \lambda_i A_i'(m_i^\star) = 0
\quad\Rightarrow\quad
A_i'(m_i^\star) = \frac{c_i}{\lambda_i}.
\]
In particular, there exists a multiplier \(\lambda_i>0\) such that at the optimum \(m_i^\star\), the marginal benefit of contributing data, \(\A_i'(m_i^\star)\), equals the marginal cost \(c_i\) scaled by \(\lambda_i\). Intuitively, \(\lambda_i\) serves to balance accuracy and cost at the threshold point. It is exactly the shadow price of the accuracy constraint: it measures how much the agent’s minimum cost would increase if the required accuracy threshold were raised slightly.

This yields a precise equivalence between the threshold and quasi-linear viewpoints. Consider the unconstrained quasi-linear utility
\[
U_i(m_i) = A_i(m_i) = \tilde{\lambda}_i m_i,
\]
where \(\tilde{\lambda}_i\) is the agent’s marginal disutility for each contributed sample (we can absorb \(c_i\) into \(\tilde{\lambda}_i\) by scaling). The first-order condition for maximizing \(U_i\) is \(A_i'(m_i) = \tilde{\lambda}_i\). Choosing \(\tilde{\lambda}_i = c_i/\lambda_i\) makes this condition identical to the KKT stationarity condition above, so \(m_i^\star\) is also optimal for the unconstrained problem. In other words, the same \(m_i^\star\) solves both “minimize cost subject to meeting the threshold” and “maximize smooth quasi-linear utility that trades off accuracy and cost at rate \(\tilde{\lambda}_i\).” The optimal stopping point for a satisficing agent under a continuous accuracy–benefit curve is therefore identical to the stopping point under the binary threshold model. Ex post, using a threshold utility \textit{does not} exclude the optimal outcome that a more complex utility would produce: for any smooth, monotone accuracy–benefit function and convex cost, there exists a cost weight \(\tilde{\lambda}_i\) such that the agent behaves as if it had a hard requirement.

The above argument relies on \(A_i(m_i)\) being monotone (more data never hurts) and differentiable; if we further assume \(A_i\) is concave (diminishing returns on accuracy), the correspondence between the threshold model and a smooth utility model becomes tight. Under concavity, the stopping point \(m_i^\star\) is uniquely characterized by the balance of marginal benefit and cost, and this point is exactly where it is optimal to “just meet” the target and not contribute further. In that sense, any agent with a smooth, monotone benefit of accuracy and convex cost behaves as if it had a personal threshold, and the binary payoff is a modeling convenience rather than a fundamental restriction.


\section{Existence and Efficiency of Equilibria}\label{appx:equilibria}
In this section, we study the strategic behavior of self-interested agents in our model.

\subsection{Existence of Equilibria}


We impose weak \emph{monotonicity}—more data from any agent never hurts another.
The assumption guarantees equilibrium existence. 
\begin{assumption}[Monotonicity] \label{assumption:monotone}
For every agent $i$, utility is weakly increasing in any agent’s contribution.  
Let $\mathbf{m}$ and $\mathbf{m}'$ be two contribution profiles with $m_j' \ge m_j$ for every agent $j$. Then
\[
u_i(\mathbf{m}') \;\ge\; u_i(\mathbf{m})   \quad\forall\,i\in\mathcal A.
  \nonumber.
\] 
\end{assumption}

\begin{theorem}[Existence of Nash Equilibrium]
\label{thm:existence}
A Nash equilibrium exists.
\end{theorem}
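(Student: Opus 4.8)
The plan is to reduce the game to one with \emph{finite} strategy sets and then apply Nash's theorem. The difficulty is that each agent's action set is $\mathbb{N}$, which is unbounded, so Nash's existence result does not apply out of the box; moreover, pure equilibria can fail to exist (\cref{thm:non_existance_2}), so the target is a mixed-strategy equilibrium. The crux is thus to show that each agent has only finitely many undominated contribution levels, collapsing the game to a finite one.

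First I would argue that agent $i$ never gains by contributing more than its solo requirement $n_i^{\text{ind}}$. When every other agent contributes nothing, $m_i = n_i^{\text{ind}}$ already meets agent $i$'s $(\varepsilon,\delta)$-target by definition, so $a_i^{\varepsilon,\delta}=1$ there. Now apply \cref{assumption:monotone} holding $m_i$ fixed and raising the other coordinates to an arbitrary $\mathbf{m}_{-i}$: since the cost term $-c_i m_i$ is unchanged, the monotonicity of $u_i$ transfers to the indicator, giving $a_i^{\varepsilon,\delta}(m_i,\mathbf{m}_{-i})=1$ for every $m_i\ge n_i^{\text{ind}}$ and \emph{every} opponent profile $\mathbf{m}_{-i}$. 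Hence, for any $m_i>n_i^{\text{ind}}$,
\[
u_i(m_i,\mathbf{m}_{-i}) \;=\; 1-c_i m_i \;<\; 1-c_i n_i^{\text{ind}} \;=\; u_i(n_i^{\text{ind}},\mathbf{m}_{-i}),
\]
so $n_i^{\text{ind}}$ strictly dominates every larger contribution, uniformly in $\mathbf{m}_{-i}$. Consequently no agent ever plays outside the finite set $M_i=\{0,1,\dots,n_i^{\text{ind}}\}$, which is finite because $n_i^{\text{ind}}=O((d+\ln(1/\delta))/\varepsilon)$.

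Next I would pass to the restricted game in which agent $i$'s pure strategies are exactly $M_i$. This is a finite normal-form game, so Nash's theorem yields a mixed-strategy equilibrium $\bm\sigma=(\sigma_1,\dots,\sigma_k)$ of it, payoffs being the usual multilinear extension of $u_i$. Finally I would lift $\bm\sigma$ to the original game: a deviation by agent $i$ to some $m_i'\in M_i$ is unprofitable by the equilibrium property of the restricted game, while a deviation to $m_i'>n_i^{\text{ind}}$ is, by the uniform domination above, strictly worse than $n_i^{\text{ind}}\in M_i$ against every realization of $\bm\sigma_{-i}$ and hence worse in expectation. Thus $\bm\sigma$ is a Nash equilibrium of the full game.

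I expect the uniform domination step to be the main obstacle—everything rests on $a_i^{\varepsilon,\delta}(m_i,\mathbf{m}_{-i})$ staying equal to $1$ for \emph{all} $\mathbf{m}_{-i}$ once $m_i\ge n_i^{\text{ind}}$, so that the pointwise comparison survives taking expectations under a mixed opponent profile. This is exactly the role of \cref{assumption:monotone}: without it, extra data from other agents could conceivably knock agent $i$ back below its threshold, which would reopen arbitrarily large contributions as potential best responses and destroy the finiteness needed for Nash's theorem.
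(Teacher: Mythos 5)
Your proof is correct and follows essentially the same route as the paper's: both use self-sufficiency plus monotonicity to show that $n_i^{\text{ind}}$ strictly dominates all sufficiently large contributions (the paper cuts off at $c_i m_i>1$, you cut off at $n_i^{\text{ind}}$ itself, but either bound gives finiteness), reduce to a finite game, and invoke Nash's theorem for a mixed equilibrium. Your explicit lifting of the equilibrium from the restricted game back to the full game is a detail the paper leaves implicit but is the standard completion of the same argument.
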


\begin{proof}
Since every agent $i \in \mathcal{A}$ is self-sufficient, $i$ can, by contributing $n_i^{\text{ind}}<\infty$ samples on their own, satisfy their objective. Since additional samples supplied by other agents never reduce any agent’s probability of meeting its objective, the strategy $m_i=n_i^{\text{ind}}$ guarantees utility $1-c_i n_i^{\text{ind}}\ge 0$ for every profile $\mathbf{m}_{-i}$. In contrast, any action with $c_i m_i>1$ yields utility $1-c_i m_i<0$ irrespective of the others’ choices, so every such action is strictly dominated by $n_i^{\text{ind}}$. Thus agent $i$’s undominated actions lie in the finite set $\{0,1,\dots,n_i^{\text{ind}}\}$. With each player restricted to finitely many pure strategies, the game is finite, and Nash’s existence theorem for finite games guarantees at least one (possibly mixed) Nash equilibrium.

\end{proof}

We demonstrate, however, that pure Nash equilibria do not always exist:
\begin{theorem}[Non-existence of Pure Nash Equilibrium]
\label{thm:non_existance_2} 
There exists a PAC learning setting in which no pure Nash equilibrium exists.
\end{theorem}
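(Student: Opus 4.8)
The plan is to exhibit a small, explicit PAC learning instance and verify by hand that every pure contribution profile admits a profitable unilateral deviation. The natural engine for non-existence is a \emph{best-response cycle}: a two-agent setting in which, whenever one agent contributes enough to satisfy both thresholds, the other agent strictly prefers to free-ride and cut its own contribution, but once it free-rides the first agent is now over-serving and prefers to cut back, which in turn breaks the threshold and forces the second agent to re-contribute. To engineer this, I would pick asymmetric per-sample costs $c_1 \neq c_2$ together with distributions whose disagreement structure makes the PAC constraint satisfiable by \emph{either} agent alone but never cheaply satisfiable by a balanced split, so that the cost-minimizing way to meet both targets always loads the burden onto whichever agent is currently ``not expecting'' to pay.

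Concretely, I would reuse the Alice/Bob-style construction from \Cref{sec:equilibria} (instance space $\cX=\{x_A,x_B\}$, each agent concentrated on a different point) but break symmetry in the costs. Recall that in that example, meeting both thresholds requires that the pooled sample contain both $x_A$ and $x_B$ with probability $\ge 1-\delta$. The key combinatorial fact I would establish is a \emph{threshold characterization}: there is a function describing, for each value of $m_{-i}$, the minimum $m_i$ that agent $i$ must contribute for $a_i^{\varepsilon,\delta}(\bm m)=1$ to hold for \emph{both} agents (since under this objective each agent requires both points classified, the two agents' targets coincide and reduce to a single joint coverage constraint). Because the coverage probability $1-\prod_i(1-\cD_i(\{x_A\}))^{m_i}$ and its counterpart for $x_B$ are monotone and strictly decreasing in each $m_i$, this gives a clean downward-sloping feasibility frontier in the $(m_1,m_2)$ grid. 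I would then choose $c_1,c_2$ so that along this integer frontier no lattice point is a mutual best response: given the other's contribution, each agent wants to sit at the cheapest feasible point for itself, and the asymmetry in costs makes these two ``cheapest feasible'' targets incompatible.

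The execution then reduces to checking finitely many candidate profiles. By \Cref{thm:existence}'s dominance argument, each agent's undominated pure actions lie in $\{0,1,\dots,n_i^{\text{ind}}\}$, and self-sufficiency (\Cref{self_sufficiency}) bounds $n_i^{\text{ind}}=O((d+\ln(1/\delta))/\varepsilon)$; so the candidate pure-equilibrium set is finite and, in a two-point two-agent world, genuinely small. For each candidate I would verify a deviation: if the joint coverage constraint fails, the under-served agent deviates to $(n_i^{\text{ind}},\mathbf m_{-i})$ and strictly gains (utility jumps from $\le 0$ to $1-c_i n_i^{\text{ind}}>0$); if the constraint holds with slack on agent $i$'s side, that agent shaves one sample to strictly lower cost while preserving its own payoff of $1$. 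The crux is to calibrate $\varepsilon,\delta,c_1,c_2$ so these two deviation types partition \emph{all} lattice points, leaving no fixed point.

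The main obstacle I anticipate is the integrality of $\bm m$ interacting with the probabilistic coverage constraint: the feasibility frontier lives on the integer lattice, and it is easy to accidentally create a profile where the slack is smaller than one sample on both sides simultaneously, making it a pure equilibrium. Avoiding this requires tuning the gap between $(1-2\varepsilon)$ and $2\varepsilon$ so that the discrete feasibility frontier is ``steep'' enough that reducing either coordinate by one unit strictly violates coverage for some target while the cost asymmetry still forbids mutual optimality. I expect to handle this by taking $\varepsilon$ small (so $n_i^{\text{ind}}=\Theta(1/\varepsilon)$ is large and the frontier has many lattice points to work with) and choosing $c_1 m_1$ and $c_2 m_2$ to straddle the payoff value $1$ in opposite directions, which is exactly where \Cref{self_sufficiency} gives the room to make every boundary point unstable.
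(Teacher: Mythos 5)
Your plan has the right high-level intuition---non-existence must come from a best-response cycle---but the two-agent construction you propose cannot produce one, and the specific cycle you describe never closes. In your Alice/Bob instance both agents' $(\varepsilon,\delta)$-requirements reduce to the same joint coverage constraint, so consider the profile $(n_1^{\text{ind}},0)$: Bob's requirement is already met, so his unique best response is $0$; and Alice's unique best response to $m_2=0$ is exactly $n_1^{\text{ind}}$, since anything smaller leaves her unserved with utility $\le 0$, anything larger only adds cost, and \cref{self_sufficiency} guarantees $1-c_1 n_1^{\text{ind}}>0$. The step of your cycle where the solo provider ``is now over-serving and prefers to cut back'' therefore never occurs; the free-riding profile is a pure Nash equilibrium---indeed it is precisely the inefficient equilibrium the paper exhibits in \cref{sec:equilibria}. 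Cost asymmetry cannot rescue this, because under \cref{self_sufficiency} each agent's best response is always the minimal feasible contribution, independent of $c_1,c_2$.

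The obstruction is structural rather than a matter of tuning $\varepsilon,\delta,c_1,c_2$: with two agents and monotone requirements (\cref{assumption:monotone}), each best response $BR_i(m_{-i})$ is a non-increasing integer-valued function with $BR_i(0)=n_i^{\text{ind}}$, so the composition $BR_1\circ BR_2$ is non-decreasing on the finite chain $\{0,\dots,n_1^{\text{ind}}\}$ and hence has a fixed point, which together with $m_2^*=BR_2(m_1^*)$ yields a pure equilibrium. This is why the paper's proof of \cref{thm:non_existance_2} uses \emph{three} agents with a cyclic (``rock--paper--scissors'') pattern of distributions over three points, where whoever covers one agent's rare point is in turn under-served by a third: only with at least three players can the best-response dynamics genuinely cycle. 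To repair your argument you would need to abandon the two-agent frontier picture and construct such an asymmetric cyclic dependency among three or more agents.
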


\begin{proof}
Consider an instance space $\mathcal{X}=\{x_1,x_2,x_3\}$ and a hypothesis class $\mathcal{H}$ that contains all possible labeling. Let agents have data distributions:
Define the data distributions of the three agents by the point-probabilities
\[
  \begin{aligned}
    &\mathcal{D}_1(x_1)=\tfrac13,\; \mathcal{D}_1(x_2)=\tfrac23,\; \mathcal{D}_1(x_3)=0,\\
    &\mathcal{D}_2(x_1)=0,\;      \mathcal{D}_2(x_2)=\tfrac13,\; \mathcal{D}_2(x_3)=\tfrac23,\\
    &\mathcal{D}_3(x_1)=\tfrac23,\; \mathcal{D}_3(x_2)=0,\;      \mathcal{D}_3(x_3)=\tfrac13.
  \end{aligned}
\]
Fix the PAC parameters \(\varepsilon=\tfrac13\) and \(\delta=\tfrac23\). Each agent seeks accuracy at least $\frac{2}{3}$ with probability at least $\frac{1}{3}$. Hence every agent \(i\) requires that, with probability at least \(1-\delta=\tfrac13\),
the learned classifier incurs error at most \(\varepsilon\) on \(\mathcal{D}_i\).

In this setup, no agent is incentivized to contribute more than one sample.  Checking possible pure strategies, one observes: At $\mathbf{m} = (1,1,0)$, agent 1 can deviate to 0 without losing accuracy, incentivizing deviation. And, at $\mathbf{m} = (0,1,0)$, the contribution level is insufficient for agent 3. Thus, no pure equilibrium exists.
\end{proof}

\subsection{Efficiency of Equilibria: Price of Stability}

\pos*
\begin{proof}
Fix $\varepsilon\in(0,\tfrac12)$ and $\delta\in(0,1)$.  
We construct an instance whose PoS satisfies
\[
\text{PoS}
\;=\;
\frac{\log(1/\varepsilon)+\log(1/\delta)}{\log(1/\delta)}
\;=\;
\Omega\!\bigl(\log(1/\varepsilon)\bigr),
\]
so the ratio diverges as $\varepsilon\to0$.
Let $n > \tfrac{1}{2\varepsilon}$ and set
\[
\mathcal X \;=\; \{x_1,\dots,x_n,y,z\}.
\]
Define the hypothesis class
\[
\mathcal H \;=\; \{\,h_0\}\;\cup\;\{\,h_i : i\in[n]\},
\]
where  
\[
h_0(x)=0\quad(\forall\,x\in\mathcal X),\qquad
h_i(x)=
\begin{cases}
1 &\text{if }x\in\{x_i,z\},\\[2pt]
0 &\text{otherwise}.
\end{cases}
\]
Thus $y$ is always labeled~$0$, while the label of $z$ determines whether all $x_1,\dots,x_n$ are~$0$ ($h^{0}$) or exactly one of them is~$1$ ($h^{i}$).

Alice and Bob have marginal distributions
\[
\begin{alignedat}{3}
\mathcal{D}_{\text{Alice}}(x_i) &= 2\varepsilon &\quad& \forall i\in[n],  &\qquad&
\mathcal{D}_{\text{Alice}}(y) = \mathcal{D}_{\text{Alice}}(z) = 0,\\[4pt]
\mathcal{D}_{\text{Bob}}(z)   &= \varepsilon, && 
\mathcal{D}_{\text{Bob}}(y) = 1-\varepsilon, &\quad&
\mathcal{D}_{\text{Bob}}(x_i) = 0 \quad \forall i\in[n].
\end{alignedat}
\]
In every NE, Bob contributes $0$ samples, because any $h \in \mathcal{H}$ has error at most $\varepsilon$ on his distribution, so his $(\varepsilon,\delta)$-requirement is already satisfied. In the worst case, where $h_0$ is the target function, Alice must choose $m$ so that she samples every point $x_1,\dots,x_n$ with probability at least $1-\delta$. Let $m_{\mathrm{NE}}$ be the smallest integer for which this condition holds. 

In the minimum-cost contribution vector, Bob supplies \(m_{\mathrm{Bob}}^{\mathrm{opt}}=\frac{\log(1/\delta)}{\log(1-\varepsilon)}\) samples, which include \(z\) with probability \(1-\delta\). If the true target is \(h_0\) no further data are needed. Otherwise the target is some \(h_i\); Alice then has to sample only until she sees the single point \(x_i\). Let \(m_{\mathrm{Alice}}^{\mathrm{OPT}}\) be the least integer guaranteeing this with probability \(1-\delta\). Sampling until one designated point appears is far cheaper than sampling until all \(n\) points appear.
Hence the optimal cost is
\[
\|\mathbf m^{\mathrm{opt}}\|_1
=m_{\mathrm{Bob}}^{\mathrm{opt}}+m_{\mathrm{Alice}}^{\mathrm{opt}}
\;\le\;
\frac{3\log(1/\delta)}{2\varepsilon}.
\]

Taking the ratio,
\[
\text{PoS}
=
\frac{\|\mathbf m^{\mathrm{NE}}\|_1}{\|\mathbf m^{\mathrm{OPT}}\|_1}
=
\frac{\log(1/\varepsilon)+\log(1/\delta)}{\log(1/\delta)}
=
\Omega\!\bigl(\log(1/\varepsilon)\bigr),
\]
which grows without bound as $\varepsilon\to0$.
\end{proof}

\section{Details and Proofs for \texorpdfstring{\cref{sec:coordination}}{Section~\ref*{sec:coordination}}}
\label{appx:opt-collaboration}

\subsection{\texorpdfstring{Proof of the NP-hardness Result (\cref{thm:nphard})}{Proof of the NP-hardness Result  (Theorem 2)}}
\label{app:nphard}

We now formally address the computational complexity of deciding whether a given number of samples is sufficient to satisfy the $(\varepsilon, \delta)$-PAC learning guarantee for all agents. Given a data domain $\mathcal{X}$, a hypothesis class $\mathcal{H}$ of size $n$, distributions $\bm{\mathcal D}=(\mathcal D_1,\dots,\mathcal D_k)$ for each agent, an accuracy parameter $\varepsilon > 0$, a confidence parameter $\delta \in [0,1)$, and an integer $m$, we define the \emph{Collaborative PAC Sample-Allocation Problem} as follows:

\begin{definition}[Collaborative PAC Sample-Allocation Problem]
\label{def:mts}

\begin{figure}[t]
    \centering
    \begin{minipage}[t]{0.4\textwidth}
        \centering
        \includegraphics[width=\textwidth]{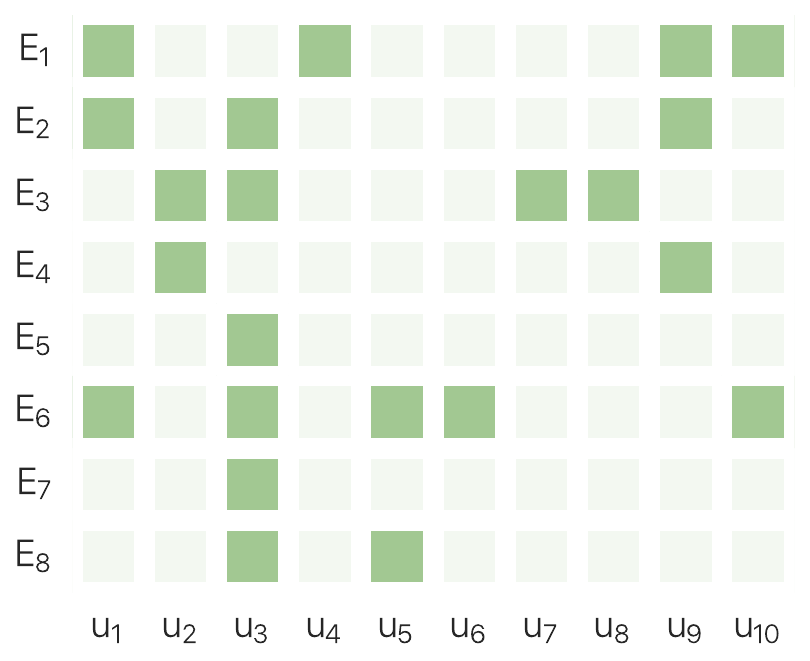}
    \end{minipage}%
    \begin{minipage}[t]{0.6\textwidth}
        \centering
        \includegraphics[width=\textwidth]{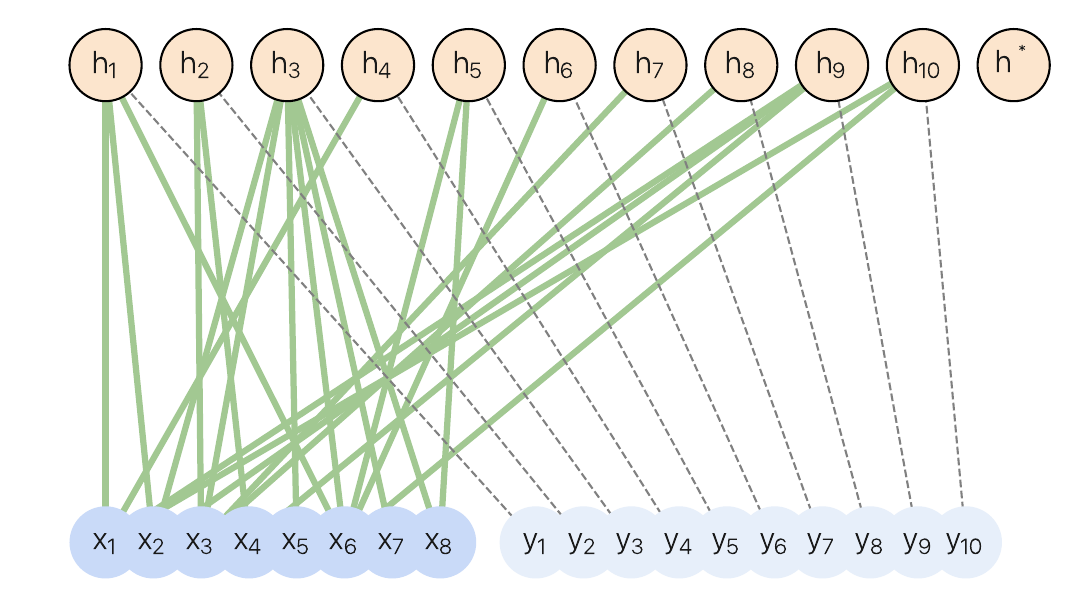}
    \end{minipage}

    \caption{Visualization of the \textsc{Set Cover} reduction used in \cref{thm:nphard}.
\textbf{Left:} Incidence matrix of the original Set Cover instance.  Each row is a subset $E_j$ and each column an element $u_i$; a \colorbox{green}{square} means $u_i\in E_j$. 
\textbf{Right:} The corresponding bipartite graph construction: each node \(x_j\) corresponds to subset \(E_j\) and is connected to the hypotheses \(h_i\) for which \(u_i \in E_j\). 
A size-$m$ set cover on the left corresponds to an $m$-sample training set that forces ERM to output $h^\star$ on the right.}
    \label{fig:setcover}
\end{figure}

Given domain $\mathcal{X}$, hypothesis class $\mathcal{H}$,
distributions $\bm{\mathcal D}=(\mathcal D_1,\dots,\mathcal D_k)$,
accuracy parameter $\varepsilon > 0$, confidence parameter
$\delta \in [0,1)$, and integer $m$, decide whether
drawing $m$ samples i.i.d.\ from each $\mathcal{D}_i$
suffices to guarantee, with probability at least
$1-\delta$, an ERM hypothesis of error at most $\varepsilon$
for every agent $i\in\mathcal{A}$.
\end{definition}

In other words, if each agent contributes exactly $m_i$ samples from their distribution, will the pooled dataset ensure $\varepsilon$-accurate performance on all distributions simultaneously? We now examine the computational complexity of this question. We establish the following computational hardness result:

\nphard*

\begin{proof}
We reduce from the NP-complete \textsc{Set Cover} problem \citep{garey-johnson}: given a universe 
 $U = \{u_1, \dots, u_n\}$, a family of subsets $\mathcal{E} = \{E_1, \dots, E_r\}$ with $\bigcup_{i \in [r]} E_i = U$, and an integer $m$, decide whether at most $m$ subsets cover $U$.

We map an instance $(U, \mathcal{E}, m)$ of \textsc{Set Cover} to an instance of the Collaborative PAC Sample-Allocation Problem as follows: let the data domain be 
\(\mathcal X=\mathcal X_1\cup\mathcal X_2\) where 
\[
\mathcal X_1=\{x_1,\dots,x_r\},\qquad
\mathcal X_2=\{y_1,\dots,y_n\}.
\]
Each point $x_j \in \mathcal{X}_1$ corresponds directly to subset $E_j \in \mathcal{E}$, and each point $y_i \in \mathcal{X}_2$ corresponds to element $u_i \in U$. Define the hypothesis class
\[
\mathcal{H} = \{h^\star, h_1, \dots, h_n\},
\]
where, for each $i \in [n]$ and $j \in [k]$:
\[
h_i(x_j) = \begin{cases}
1 & \text{if } u_i \in E_j \\
0 & \text{otherwise}
\end{cases}, \quad
h_i(y_j) = \begin{cases}
1 & \text{if } i = j \\
0 & \text{otherwise}
\end{cases}, \quad h^\star(x_j) = 0,\quad h^\star(y_j) = 0.
\]
Set the distribution $\mathcal{D}$ to be uniform over $\mathcal{X}$.
Choose \(0<\varepsilon<1/(r+n)\) so that a single classification error violates the \(\varepsilon\)-accuracy goal, and set \(\delta\) with \(1-\delta\ge(|\mathcal X|^{-1})^m\).

If a set cover of size \(p\le m\) exists, say \(\{E_{j_1},\dots,E_{j_p}\}\), consider the sample \(S=\{x_{j_1},\dots,x_{j_p}\}\).  
For every competing hypothesis \(h_i\neq h^\star\) there is a subset \(E_{j_q}\) containing \(u_i\), hence \(h_i(x_{j_q})=1\neq 0=h^\star(x_{j_q})\).  
Each \(h_i\) therefore incurs at least one error on \(S\), so ERM selects \(h^\star\).  
Thus \(m\) samples suffice to meet the agent’s \((\varepsilon,\delta)\)-requirement.

Conversely, assume that $m$ samples drawn i.i.d. from $\mathcal{D}$ suffice to guarantee, with probability at least $1-\delta$, an ERM hypothesis achieving error at most $\varepsilon$ for any possible true hypothesis. Due to our choice of $\varepsilon$, the ERM hypothesis must correctly classify all points in $\mathcal{X}$. Consider the worst-case scenario where the true hypothesis is $h^\star$. To distinguish $h^\star$ from all other hypotheses $h_i$, the training set must include points from $\mathcal{X}_1$ that differentiate $h^\star$ from each $h_i$. Specifically, for each element $u_i \in U$, the training set must contain at least one point $x_j \in \mathcal{X}_1$ with $h_i(x_j)=1$ (meaning $u_i \in E_j$), ensuring $h_i$ is eliminated by ERM. Therefore, the set of points in the minimal training set corresponds directly to subsets forming a valid set cover of size at most $m$.

Notably, if any other hypothesis $h_i \neq h^\star$ were the true target, fewer samples would trivially suffice. Thus, the case where $h^\star$ is the true hypothesis indeed represents the worst-case.
\end{proof}

\subsection{\texorpdfstring{Proof of the Approximation Guarantee (\cref{thm:approximate-mts})}{Proof of the Approximation Guarantee (Theorem~3)}}%
\label{app:approx-mts}

\approxmts*
\begin{proof}
We start with the case where the hypothesis class $\mathcal H$ is finite, the extension to infinite $\mathcal H$ follows with a standard covering argument.

\paragraph{Proof for finite $\cH$.}
For any pair of hypotheses $(h_1,h_2) \in \cH$, 
define the \textit{disagreement regions} between $h_1$ and $h_2$ as 
\[
  \mathrm{DIS}(h_1, h_2)
  \;=\;
  \{\,x \in \mathcal{X} \;\mid\; h_1(x)\neq h_2(x)\}.
\]

This region represents the set of points where \(h_1\) and \(h_2\) disagree. For each agent $i$, let $\mathcal{D}_i\bigl(\mathrm{DIS}(h_1,h_2)\bigr)$ be the probability mass that $\cD_i$ places on that region.

We want to guarantee that with probability at least $1-\delta$, the ERM solution has error $\le\varepsilon$ for all agents simultaneously. Concretely, if the true labeler is $h^\star \in \cH$, then \emph{any} other hypothesis $h_2$ that has $\mathcal{D}_i(\mathrm{DIS}(h^\star,h_2))>\varepsilon$ (meaning it is ``bad'' for agent~$i$) must be \emph{eliminated} by at least one sample from that disagreement region.

The probability that $h_2$ is never eliminated, given $h^\star$ is the true labeler, is exactly the probability that \emph{no sample} from any agent $i$ ever lands in $\mathrm{DIS}(h^\star,h_2)$. 
Since each agent $i$ contributes $m_i$ i.i.d.\ points from $\cD_i$,
\[
  \Pr\!\bigl(h_2 \text{ is not eliminated} \;\mid\; h^\star = h^\star\bigr)
  \;=\;
  \prod_{i=1}^k 
  \Bigl(1 - \mathcal{D}_i\bigl(\mathrm{DIS}(h^\star,h_2)\bigr)\Bigr)^{\,m_i}.
\]

Hence the requirements $\Pr(h_2\text{ not eliminated}\mid h^\star) \le \frac{\delta}{H}$ are log-linear constraints:
\[
  \sum_{i=1}^k 
  m_i \,\log\!\Bigl[\,1 - \mathcal{D}_i\bigl(\mathrm{DIS}(h^\star,h_2)\bigr)\Bigr]
  \;\;\le\;\;
  \log\!\Bigl(\tfrac{\delta}{\,H}\Bigr).
\]

We impose such constraints for \emph{all} pairs $(h^\star,h_2)$ where $\mathcal{D}_i(\mathrm{DIS}(h^\star,h_2))>\varepsilon$ for agent~$i$.  By union bounding across $\le H$ possible ``bad'' hypotheses $h_2$ (for each $h^\star$) or $\le H^2$ pairs overall, we ensure that with probability $\ge1-\delta$, any truly bad hypothesis (in the sense of having error $>\varepsilon$) is eliminated.
    
Hence, we obtain a linear constraint in terms of \(m_i\), which allows us to approximate the elimination probability for each hypothesis pair \((h_1, h_2)\) by solving the following linear program:
    \[
    \min_{\mathbf{m} \in \mathbb{N}^k} \quad \mathbf{c}^\top \mathbf{m}
    \]
    subject to
    \begin{equation}
        \sum_{i=1}^k m_i \log \left( 1 - \mathcal{D}_i(\text{DIS}(h_1, h_2)) \right) \leq \log \frac{\delta}{H} \quad \forall (h_1, h_2) \in \mathcal{H}^2 \text{ s.t. } \exists i\in [k], \mathcal{D}_i(\text{DIS}(h_1, h_2)) > \varepsilon. \label{eq:relaxed}
    \end{equation}
One can then round the fractional solution up to integer counts $m_i$.

    Let \(\mathbf{m}^*\) be the solution of the linear program. For any agent \(i\), we want to show:
    \[
    \forall h^\star \in \mathcal{H}, \quad \Pr_{U_j \sim \mathcal{D}_j^{m_j^*}, j \in [k]} \left( \text{err}^\text{ERM}_{\mathcal{D}_i} \left( \bigcup_{j \in [k]} U_j \times h^\star(U_j) \right) > \varepsilon \right) \leq \delta.
    \]
    This holds because the linear program enforces that, for each pair \((h_1, h_2) \in \mathcal{H}^2\), the probability of failing to eliminate any incorrect hypothesis \(h_2\) (given \(h^\star = h_1\)) is bounded by \(\delta / H\). By applying a union bound over all hypothesis pairs, we achieve the desired bound.

    To show that this solution is \(\frac{\log(1/\delta) + \log H}{\log(1/\delta)}\)-approximate optimal, let \(\mathbf{m}\) be the true optimal solution. For each pair \((h_1, h_2)\) such that \(\mathcal{D}_i(\text{DIS}(h_1, h_2)) > \varepsilon\), if we take \(\mathbf{m}\) samples, we have:
    \[
    \prod_{i=1}^k \left( 1 - \mathcal{D}_i(\text{DIS}(h_1, h_2)) \right)^{m_i} \leq \delta.
    \]
Taking logarithms we obtain:
    \[
    \sum_{i=1}^k m_i \log \left( 1 - \mathcal{D}_i(\text{DIS}(h_1, h_2)) \right) \leq \log \delta.
    \]
    Now, given a sample size of \(\frac{\log(1/\delta) + \log H}{\log(1/\delta)} \cdot \mathbf{m}\), we have:
    \[
    \sum_{i=1}^k \frac{\log(1/\delta) + \log H}{\log(1/\delta)} \cdot m_i \log \left( 1 - \mathcal{D}_i(\text{DIS}(h_1, h_2)) \right) \leq \log \frac{\delta}{H},
    \]
    which satisfies the constraint in \cref{eq:relaxed}.

    Next, we show how to lift the guarantee to infinite classes using $\gamma$-covers and apply our results to this finite cover.

\paragraph{Proof for infinite $\cH$.}

    For any $\gamma\in (0,1)$, we say $\bar \cH$ is a $\gamma$-cover of $\cH$ under $\bm{\cD}$ if for all $h\in \cH$, there exists an $\bar h\in \bar \cH$ satisfying that $\max_{i}\cD_i(\DIS(h,\bar h))\leq \gamma$. 

    \begin{lemma}\label{lmm:cover}
        Let $d= \vcd(\cH)$ denote the VC dimension of $\cH$. There is a $\gamma$-cover $\bar \cH\subset \cH$ under $\bm{\cD}$ of size $\bigl(\tfrac{41k}{\gamma}\bigr)^{d}$.
    \end{lemma}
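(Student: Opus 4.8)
The plan is to reduce the simultaneous, $k$-distribution covering requirement to an ordinary single-distribution covering problem, and then invoke the sharp covering-number bound for VC classes. Introduce the mixture distribution $\bar{\cD} = \frac{1}{k}\sum_{i=1}^k \cD_i$ together with the disagreement pseudometric $\rho(h,h') = \bar{\cD}(\DIS(h,h')) = \frac{1}{k}\sum_{i=1}^k \cD_i(\DIS(h,h'))$. The key observation is that a $(\gamma/k)$-cover under $\rho$ is automatically a $\gamma$-cover in the $\max_i$ sense demanded by the lemma: if $\bar{\cD}(\DIS(h,\bar h)) \le \gamma/k$, then
\[
\max_i \cD_i(\DIS(h,\bar h)) \;\le\; \sum_{i=1}^k \cD_i(\DIS(h,\bar h)) \;=\; k\,\bar{\cD}(\DIS(h,\bar h)) \;\le\; \gamma .
\]
So it suffices to produce a $(\gamma/k)$-cover of $\cH$ under the single distribution $\bar{\cD}$, which collapses the problem to a classical setting.

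For the single-distribution cover I would work with a \emph{maximal} $\epsilon$-packing at scale $\epsilon = \gamma/k$: a maximal set $\bar{\cH}\subseteq\cH$ whose members are pairwise more than $\epsilon$-separated under $\rho$. Maximality forces $\bar{\cH}$ to be an $\epsilon$-cover, since any uncovered $h\in\cH$ could otherwise be appended to the packing; moreover $\bar{\cH}\subset\cH$ holds by construction, delivering the required containment for free. It then remains only to bound the cardinality of any $\epsilon$-packing of a class of VC dimension $d$ under a single measure.

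The packing bound is the heart of the argument. The textbook double-sampling route proceeds by drawing $n$ i.i.d.\ points from $\bar{\cD}$: any two $\epsilon$-separated hypotheses agree on all $n$ points with probability at most $(1-\epsilon)^n \le e^{-\epsilon n}$, so a union bound over the $\binom{N}{2}$ pairs shows that for $n \approx (2\ln N)/\epsilon$ there is a sample on which all $N$ packing members induce distinct labelings; the Sauer--Shelah lemma then caps the number of labelings by $(en/d)^d$, yielding $N \le (en/d)^d$. This inequality is self-referential in $N$ (through $n$), and resolving it naively produces only $N \le (C/\epsilon)^d$ with an unspecified constant and spurious logarithmic factors. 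To land exactly on $(41k/\gamma)^d = (41/\epsilon)^d$ I would instead invoke Haussler's sharp packing-number bound $e(d+1)(2e/\epsilon)^d$, and then verify the elementary inequality $e(d+1)(2e)^d \le 41^d$ for every $d\ge 1$: this is equivalent to $e(d+1) \le (41/2e)^d = 7.54^d$, whose base case $d=1$ reads $2e \approx 5.44 \le 7.54$, and which only improves thereafter since the left side grows linearly while the right side grows exponentially. Combining the reduction with this bound gives $|\bar{\cH}| \le e(d+1)(2e/\epsilon)^d \le (41/\epsilon)^d = (41k/\gamma)^d$.

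The main obstacle is therefore obtaining the clean constant: the self-contained double-sampling packing argument is self-referential and yields only $(C/\epsilon)^d$ with extraneous logarithmic factors, so the sharp constant $41$ genuinely rests on the refined VC packing-number estimate rather than on the elementary derivation. Everything else—the mixture reduction and the maximal-packing-is-a-cover step—is routine and incurs no loss beyond the stated $k$-dependence in the scale.
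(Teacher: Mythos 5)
Your proposal is correct and follows essentially the same route as the paper: reduce to the mixture distribution $\frac{1}{k}\sum_{i}\cD_i$ at scale $\gamma/k$, observe that $\max_i \cD_i(\DIS(h,\bar h)) \le \sum_i \cD_i(\DIS(h,\bar h)) \le \gamma$, and invoke Haussler's sphere-packing bound to obtain a proper cover of size $\bigl(41k/\gamma\bigr)^{d}$. Your added details (the maximal-packing-is-a-cover step ensuring $\bar\cH\subset\cH$, and the explicit verification that $e(d+1)(2e)^{d}\le 41^{d}$) are correct refinements of what the paper simply cites.
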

\begin{proof}[Proof of \cref{lmm:cover}]
    In learning theory, for any distribution $\mathcal{D}$, a subset $\bar{\mathcal{H}}\subseteq\mathcal{H}$ is a $\gamma$-cover of $\mathcal{H}$ under $\mathcal{D}$ if, for every $h\in\mathcal{H}$, there exists $\bar h\in\bar{\mathcal{H}}$ such that $\mathcal{D}(\mathrm{DIS}(h,\bar h))\le\gamma$. Haussler's sphere-packing bound guarantees~\cite{haussler1995sphere} the existence of such a cover with
\[
|\bar{\mathcal{H}}|\le\left(\tfrac{41}{\gamma}\right)^{d}.
\]
Construct a $\tfrac{\gamma}{k}$-cover of $\mathcal{H}$ under the averaged distribution $\tfrac{1}{k}\sum_{i=1}^{k}\mathcal{D}_i$. For any $h\in\mathcal{H}$ and its representative $\bar h$ in this cover,
\[
\max_{i}\mathcal{D}_i\bigl(\mathrm{DIS}(\bar h,h)\bigr)\le\sum_{i=1}^{k}\mathcal{D}_i\bigl(\mathrm{DIS}(\bar h,h)\bigr)\le\gamma,
\]
so the same set is a $\gamma$-cover under $\boldsymbol{\mathcal{D}}=(\mathcal{D}_1,\dots,\mathcal{D}_k)$. Substituting $\tfrac{\gamma}{k}$ into Haussler's bound yields
\[
|\bar{\mathcal{H}}|\le\left(\tfrac{41k}{\gamma}\right)^{d}.
\]
\end{proof}

Let $\bm{m}^{\bar \cH, \delta'}$ denote the solution to the LP (\cref{eq:relaxed}) given hypothesis class $\bar \cH$ and confidence parameter $\delta'$. Let $c_{\text{min}} = \min_{i\in[k]} c_i$ and $c_{\text{max}} = \max_{i\in[k]} c_i$.
\begin{lemma}\label{lmm:erm-over-cover}
      By choosing $\gamma=\Theta\!\bigl(c_{\text{min}}\varepsilon\delta/(c_{\text{max}}k(d+\log(1/\delta)))\bigr)$ and $\bm{m} = \bm{m}^{\bar \cH, \delta/2}$, for any target hypothesis $h^\star\in \cH$ and agent $i$, with probability at least $1-\delta$, any consistent hypothesis  $\bar h\in \bar \cH$ will satisfy
      \[\err_{\cD_i,h^\star}(\bar h)\leq \epsilon\,.\]
\end{lemma}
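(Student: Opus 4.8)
The plan is to reduce the infinite-class guarantee to the finite-class LP analysis already established for $\bar\cH$, with the one genuinely new ingredient being that the true labeler $h^\star$ need not lie in the cover. Fix the target $h^\star\in\cH$ and an agent $i$, and let $\bar h^\star\in\bar\cH$ be a cover element witnessing $\max_j\cD_j(\DIS(h^\star,\bar h^\star))\le\gamma$, which exists by \cref{lmm:cover}. The key bridge is a \emph{consistency transfer}: on the event $G_1$ that $\bar h^\star$ labels the entire pooled sample $S$ exactly as $h^\star$ does, a cover hypothesis $\bar h\in\bar\cH$ is consistent with the $h^\star$-labels if and only if it agrees with $\bar h^\star$ on every sampled point. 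Thus, conditioned on $G_1$, analyzing which $\bar h\in\bar\cH$ survive is exactly the finite-class elimination problem with the \emph{pseudo-labeler} $\bar h^\star\in\bar\cH$, to which the finite-$\cH$ part of \cref{thm:approximate-mts} applies verbatim.

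Next I would control the two failure modes and split the confidence budget $\delta$ evenly. For $G_1$, note that $\bar h^\star$ mislabels a sample only when that point falls in $\DIS(h^\star,\bar h^\star)$, a region of mass at most $\gamma$ under every $\cD_j$; a union bound over all $\sum_j m_j$ drawn points gives $\P(G_1^c)\le\gamma\sum_j m_j$. To make this at most $\delta/2$ I would bound $\sum_j m_j$ by the LP optimum, which is in turn at most the cost of the explicit feasible profile assigning each agent $\Theta(\log(|\bar\cH|/\delta')/\varepsilon)$ samples; since $\log|\bar\cH|=O(d\log(k/\gamma))$ by \cref{lmm:cover}, this yields $\sum_j m_j=O\!\bigl(k(c_{\max}/c_{\min})(d+\log(1/\delta))/\varepsilon\bigr)$ up to logarithmic factors. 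The stated calibration $\gamma=\Theta\!\bigl(c_{\min}\varepsilon\delta/(c_{\max}k(d+\log(1/\delta)))\bigr)$ is precisely what forces $\gamma\sum_j m_j=O(\delta)$.

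For the second failure mode I would invoke the finite-class guarantee with confidence $\delta'=\delta/2$: since $\bm{m}=\bm{m}^{\bar\cH,\delta/2}$ satisfies the LP constraint for every pair in $\bar\cH^2$ whose disagreement exceeds $\varepsilon$ for some agent, with probability at least $1-\delta/2$ every $\bar h\in\bar\cH$ with $\max_j\cD_j(\DIS(\bar h^\star,\bar h))>\varepsilon$ has at least one sampled point in $\DIS(\bar h^\star,\bar h)$; call this event $G_2$. On $G_1\cap G_2$, such a point is mislabeled by $\bar h$ (because $\bar h^\star$ matches $h^\star$ there), so every \emph{consistent} $\bar h\in\bar\cH$ must have $\cD_i(\DIS(\bar h^\star,\bar h))\le\varepsilon$. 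A final triangle inequality then gives
\[
\err_{\cD_i,h^\star}(\bar h)=\cD_i(\DIS(h^\star,\bar h))\le\cD_i(\DIS(h^\star,\bar h^\star))+\cD_i(\DIS(\bar h^\star,\bar h))\le\gamma+\varepsilon,
\]
and running the cover LP at the slightly shrunken threshold $\varepsilon-\gamma$ (or folding the negligible $\gamma$ into $\varepsilon$) yields the clean bound $\err_{\cD_i,h^\star}(\bar h)\le\varepsilon$. Since $\P(G_1^c)+\P(G_2^c)\le\delta$, the claim holds with probability at least $1-\delta$.

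I expect the main obstacle to be the second step: everything hinges on $\bar h^\star$ being consistent with the sample, yet $\sum_j m_j$ grows with the problem size, so the cover radius $\gamma$ must be driven down fast enough to keep $\gamma\sum_j m_j$ below $\delta$. Making this quantitative requires a genuine a-priori bound on the LP optimum — via an explicit feasible contribution profile for the cover — and care that the cover-dependent union bound $\log|\bar\cH|=O(d\log(k/\gamma))$ does not feed back and inflate the very bound on $\sum_j m_j$ it is used to establish; resolving this mild circularity is what pins down the stated form of $\gamma$.
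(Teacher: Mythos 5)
Your proposal is correct and follows essentially the same route as the paper: both arguments condition on the event that the cover representative $\bar h^\star$ of $h^\star$ is consistent with the pooled sample (bounding its failure probability by roughly $\gamma\sum_j m_j$ and controlling $\sum_j m_j$ via an a priori feasible profile, which is exactly what pins down the stated $\gamma$), and then invoke the finite-cover LP guarantee at confidence $\delta/2$ to eliminate the bad hypotheses in $\bar\cH$. Your treatment is in fact slightly more explicit than the paper's on two points it glosses over — the pseudo-labeler reduction from $h^\star$ to $\bar h^\star$ and the residual $\gamma$ slack in the final triangle inequality — but these are refinements of the same argument, not a different one.
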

\begin{proof}[Proof of \cref{lmm:erm-over-cover}]
    Since each agent can achieve the PAC learning objective with at most $O((d+\log(1/\delta))/\varepsilon)$ data points individually, we restrict attention to contribution vectors satisfying $\|\mathbf m\|_1 \le O(\frac{c_{\text{max}}k(d+\log(1/\delta))}{c_{\text{min}}\varepsilon})$ (otherwise we can replace the solution with $O((d+\log(1/\delta))/\varepsilon) \bOne$).

For any $\mathbf m$ satisfying this constraint, any $h^\star\in\mathcal H$, and any agent $i$, let the labeled sample be
\[
S^{h^\star} \;=\; \bigcup_{j\in[k]} \{(x,h^\star(x)) \mid x\in U_j\}.
\]
Define
\[
\bar{\mathrm{err}}
\;=\;
\max\Bigl\{
\mathrm{err}_{\mathcal D_i}(\bar h)
\ \Big|\ 
\bar h\in\bar{\mathcal H},\;
\mathrm{err}_{S^{h^\star}}(\bar h)=\min_{h'\in\bar{\mathcal H}}\mathrm{err}_{S^{h^\star}}(h')
\Bigr\},
\]
the worst-case error when running ERM over $\bar{\mathcal H}$. We can show that $\bar{\mathrm{err}}$ is small with high probability. More specifically, assuming a $(\varepsilon,\delta/2)$-PAC guarantee for $\bar{\mathcal H}$, we have
\begin{align*}
    &\Pr_{S^{h^\star}}\!\Bigl[
        \exists\,\bar h:\,
        \mathrm{err}_{\mathcal D_i}(\bar h)>\varepsilon,\,
        \mathrm{err}_{S^{h^\star}}(\bar h)=
        \min_{h'\in\bar{\mathcal H}}\mathrm{err}_{S^{h^\star}}(h')
    \Bigr]\\
    \le{}&
    \Pr_{S^{h^\star}}\!\Bigl[
        \exists\,\bar h:\,
        \mathrm{err}_{\mathcal D_i}(\bar h)>\varepsilon,\,
        \mathrm{err}_{S^{h^\star}}(\bar h)=0,\,
        \mathrm{err}_{S^{h^\star}}(\bar h^\star)=0
    \Bigr]
    +
    \Pr_{S^{h^\star}}\!\Bigl[
        \mathrm{err}_{S^{h^\star}}(\bar h^\star)\neq 0
    \Bigr]\\
    \le{}&
    \Pr_{S^{h^\star}}\!\Bigl[
        \exists\,\bar h:\,
        \mathrm{err}_{\mathcal D_i}(\bar h)>\varepsilon,\,
        \mathrm{err}_{S^{h^\star}}(\bar h)=0
    \Bigr]
    +
    \Pr_{S^{h^\star}}\!\Bigl[
        \mathrm{err}_{S^{h^\star}}(\bar h^\star)\neq 0
    \Bigr]\\
    \le{}&
    \frac{\delta}{2} \;+\; \bigl(1-(1-\gamma)^{\sum_j m_j}\bigr)
    \;\le\;
    \frac{\delta}{2} \;+\; \bigl(1-(1-\gamma)^{\frac{c_{\text{max}}k(d+\log(1/\delta))}{c_{\text{min}}\varepsilon}}\bigr).
\end{align*}
Choosing $\gamma=c_{\text{min}}\varepsilon\delta/(c_{\text{max}}k(d+\log(1/\delta)))$ yields
\[
\frac{\delta}{2} + \Bigl(1-(1-\gamma)^{\frac{c_{\text{max}}k(d+\log(1/\delta))}{c_{\text{min}}\varepsilon}}\Bigr) \;\le\; \delta,
\]
so for any $h^\star$ and agent $i$, with probability at least $1-\delta$ we can always find a hypothesis whose error is at most $\varepsilon$.
\end{proof}

\begin{lemma}\label{lmm:erm-pac-guarantee}
    By choosing $\gamma=\Theta\!\bigl(c_{\text{min}}\varepsilon\delta/(c_{\text{max}}k(d+\log(1/\delta)))\bigr)$, the solution $(d +\log(1/\delta'') )\cdot \mathbf{m}^{\bar \cH,\delta'}$ is sufficient is sufficient to achieve $(\epsilon,\delta)$-PAC accuracy objective (\cref{def:pac-obj}) for $\cH$, where $\delta''= \frac{\delta}{4|\bar \cH|}$ and $\delta'=\frac{\delta}{8(d+\log(2|\bar \cH|/\delta))}$.
\end{lemma}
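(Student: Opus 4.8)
The plan is to recast the full-class PAC requirement as a hitting ($\varepsilon$-net) statement for disagreement regions, and then bootstrap the cover guarantee of \cref{lmm:erm-over-cover} to the whole class. First I would reduce the objective: since we are in the realizable case, ERM over $\cH$ returns a hypothesis consistent with the labeled pool $S$, so $\err^{\ERM}_{\cD_i,h^\star}(S)\le\varepsilon$ holds for agent $i$ exactly when the pooled sample hits the disagreement region $\DIS(h^\star,h)$ of every $h\in\cH$ with $\cD_i(\DIS(h^\star,h))>\varepsilon$. Thus it suffices to show that, under the scaled allocation $(d+\log(1/\delta''))\cdot\mathbf{m}^{\bar\cH,\delta'}$, with probability at least $1-\delta$ every such ``bad'' disagreement region contains at least one sample point, simultaneously for all agents.

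The ranges $\{\DIS(h^\star,h):h\in\cH\}$ form a set system of VC dimension at most $d$ (fixing $h^\star$ only XORs the class, preserving VC dimension), so I would establish the hitting statement by the standard double-sampling/symmetrization argument, adapted to the mixture sampling in which agent $j$ contributes $m_j$ i.i.d.\ points from $\cD_j$. For a range $r$ the miss probability is $\prod_j(1-\cD_j(r))^{m_j}\le e^{-\mu(r)}$ with $\mu(r)=\sum_j m_j\cD_j(r)$ the expected number of hits, and symmetrization replaces the infinite class by its projection onto the doubled sample, costing a growth-function factor $\Pi_\cH(2M)\le(2M)^{d}$ where $M$ is the total scaled sample size. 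Consequently it is enough to guarantee $\mu(r)\gtrsim d\log(2M)+\log(1/\delta)$ for every bad range $r$.

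To certify this lower bound on $\mu(r)$ I would route each full-class range through the cover. Replacing $h$ and $h^\star$ by their $\gamma$-cover representatives $\bar h,\bar h^\star\in\bar\cH$ and using the triangle inequality for symmetric differences gives $|\cD_j(\DIS(\bar h^\star,\bar h))-\cD_j(\DIS(h^\star,h))|\le 2\gamma$ for every $j$, hence $\mu(r)\ge\mu(r')-2\gamma M$ where $r'=\DIS(\bar h^\star,\bar h)$ is a \emph{cover} pair. By feasibility of the LP (the finite-$\cH$ analysis above), the anchor mass satisfies $\mu(r')\gtrsim\log(|\bar\cH|/\delta')=\Theta\bigl(d\log(k/\gamma)+\log(1/\delta')\bigr)$ for every bad cover pair, and multiplying the allocation by $t=d+\log(1/\delta'')$ scales this to $t\,\mu(r')$. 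The choice $\gamma=\Theta\!\bigl(c_{\text{min}}\varepsilon\delta/(c_{\text{max}}k(d+\log(1/\delta)))\bigr)$, together with the a priori bound $\|\mathbf{m}^{\bar\cH,\delta'}\|_1=O(c_{\text{max}}k(d+\log(1/\delta))/(c_{\text{min}}\varepsilon))$ from the proof of \cref{lmm:erm-over-cover}, makes the slippage $2\gamma M=2\gamma t\|\mathbf{m}^{\bar\cH,\delta'}\|_1$ negligible, so $t\,\mu(r')$ dominates $d\log(2M)+\log(1/\delta)$ and the net requirement is met. A union bound over the $\le|\bar\cH|^2$ cover pairs (with the prescribed $\delta',\delta''$) and over the $k$ agents then collects the total failure probability to at most $\delta$.

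The main obstacle is the uniform extension from the cover to the infinite class: a consistent $h\in\cH$ need not have a consistent cover representative, so one cannot simply invoke \cref{lmm:erm-over-cover} as a black box, and the passage must instead go through the range-space/symmetrization machinery. The delicate point is to balance three quantities at once—$\gamma$ must be small enough that the slippage $2\gamma M$ between $\DIS(h^\star,h)$ and its cover surrogate is swamped, the cover must be fine enough that its log-size supplies the $d\log(k/\gamma)$ anchor mass, and the scaling factor $t=d+\log(1/\delta'')$ must be large enough to clear the growth-function term $d\log(2M)$ in the symmetrization bound. Verifying that the stated $\gamma,\delta',\delta''$ simultaneously satisfy all three inequalities is the crux of the calculation.
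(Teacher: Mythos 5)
Your route is genuinely different from the paper's. The paper never proves a uniform hitting statement over all of $\cH$: it first argues that the cover representative $\bar h^\star$ of the target is consistent with high probability, then, for each \emph{cover} pair $(\bar h^\star,\bar h)$ with disagreement mass above $\varepsilon$, shows that the scaled allocation delivers $n=O(d+\log(1/\delta''))$ samples into $\DIS(\bar h^\star,\bar h)$; conditionally on landing there, those samples follow the restricted mixture $\cD_{|\bar h^\star,\bar h}$, under which every $h$ in the $\gamma$-ball of $\bar h$ disagrees with $\bar h^\star$ on mass at least $3/4$, so a \emph{localized} PAC bound kills the entire ball with probability $1-\delta''$; a union bound over $\bar\cH$ finishes. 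You instead run one global symmetrization over the range space $\{\DIS(h^\star,h):h\in\cH\}$ and use the cover only to certify a lower bound on the expected hit count $\mu(r)$ of each bad range. Both strategies address the same obstacle you correctly identify (a consistent $h\in\cH$ need not have a consistent representative); the paper's localization avoids having to prove a multi-distribution $\varepsilon$-net theorem, while yours, if carried out, is arguably cleaner and explains more directly why the factor $d+\log(1/\delta'')$ appears (it is exactly the growth-function overhead $d\log(2M)+\log(1/\delta)$).

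There is one step that does not hold as you state it: ``by feasibility of the LP, the anchor mass satisfies $\mu(r')\gtrsim\log(|\bar\cH|/\delta')$.'' The LP constraint is $\sum_i m_i\log\frac{1}{1-\cD_i(r')}\ge\log(|\bar\cH|/\delta')$, and since $\log\frac{1}{1-x}\ge x$ this bounds the \emph{wrong} side of $\mu(r')=\sum_i m_i\cD_i(r')$. The implication $\mu(r')\gtrsim\log(|\bar\cH|/\delta')$ is false in general: take $m_i=1$ and $\cD_i(r')=1-\delta'/|\bar\cH|$, which satisfies the constraint with $\mu(r')<1$. You need a case split — when every $\cD_i(r')\le 1/2$ the inequality $\log\frac{1}{1-x}\le 2x$ recovers your bound up to a factor $2$, and when some $\cD_i(r')>1/2$ the scaled allocation already contributes $t\cdot m_i\cD_i(r')\ge t/2=\Omega(d+\log(|\bar\cH|/\delta))$ expected hits from that agent alone, which (after checking constants against $d\log(2M)+\log(1/\delta)$) still clears the threshold. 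A second, smaller issue is that the LP only imposes constraints on cover pairs with mass exceeding $\varepsilon$, while a bad full-class range can have a surrogate of mass as low as $\varepsilon-2\gamma$; this off-by-$2\gamma$ slack should be absorbed by running the LP at threshold $\varepsilon/2$ (the paper's own argument is loose on the same point). With those two repairs your proof goes through.
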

\begin{proof}[Proof of \cref{lmm:erm-pac-guarantee}]
Similar to \cref{lmm:erm-over-cover}, we again restrict attention to contribution vectors satisfying  $\|\mathbf m\|_1 \le O(\frac{c_{\text{max}}k(d+\log(1/\delta))}{c_{\text{min}}\varepsilon})$. This guarantee that for any target $h^\star$,
with probability at least $1-\frac{\delta}{2}$, $\bar h^\star$ is consistent. Hence, we can view $\bar h^\star$ as our target hypothesis.

We now prove that for any target hypothesis $\bar h^\star\in \bar \cH$ and any representative hypothesis $\bar h\in \bar \cH$ satisfying $\exists i\in [k], \cD_i(\DIS(\bar h^\star,\bar h))>\epsilon$, with probability at least $1-\frac{\delta'}{2|\bar \cH|}$, any $h$ in the $\gamma$-ball of $\bar h$ will be eliminated.

Let $\beta_i=\mathcal{D}_i(\text{DIS}(\bar h^\star, \bar h))$ denote the probability mass of $\text{DIS}(\bar h^\star, \bar h)$ under agent $i$'s data distribution and let $A_\gamma = \{i\in [k]|\beta_i\geq 4\gamma\}$ denote the set of agents whose probability mass of the disagreement region is at least $4\gamma$. The approximation solution $\bm{m}^{\bar \cH, \delta'}$ guarantees that 
\[\prod_{i=1}^k \left( 1 - \beta_i\right)^{m^{\bar \cH,\delta'}_i} \leq \frac{\delta'}{|\bar \cH|}\,.\]
If $\beta_i< 4\gamma$, we have
\[(1-\beta_i)^{m^{\bar \cH,\delta'}_i} >(1-4\gamma)^{m^{\bar \cH,\delta'}_i} \geq (1-4\gamma)^{\delta'/\gamma} \geq e^{- (4\log 4)\delta'} \,,\]
where we adopt the inequality $(1-1/x)^x\geq \frac{1}{4}$ for all $x\geq 2$.
Hence, we have
\[e^{- (4\log 4)k\delta' } \cdot \prod_{i: \beta_i \geq 4\gamma} \left( 1 - \beta_i \right)^{m^{\bar \cH,\delta'}_i} \leq  \frac{\delta'}{|\bar \cH|} \,.\]
By rearranging terms, we have
\[\prod_{i: \beta_i \geq 4\gamma} \left( 1 - \beta_i \right)^{m^{\bar \cH,\delta'}_i} \leq  \frac{\delta'}{|\bar \cH|}\cdot e^{(4\log 4)k\delta' } \leq \frac{2\delta'}{|\bar \cH|}\,,\]
when $\delta' = O(\frac{1}{k})$.
That is to say, even if we only collect $m_i$ samples from agent $i\in A_\gamma$ and don't collect from $i\notin A_\gamma$, we will still be able to obtain one sample from the $\text{DIS}(\bar h^\star, \bar h)$ with probability at least $1-\frac{2\delta'}{|\bar \cH|}$.

For any distribution $\cD$, let $\cD_{|\bar h^\star, \bar h}$ denote the distribution restricted to $\text{DIS}(\bar h^\star, \bar h)$, i.e., 
\begin{align*}
    \cD_{|\bar h^\star, \bar h}(x) =\begin{cases}
        \frac{\cD(x)}{\cD(\DIS(\bar h^\star, \bar h))} & \text{ if } x\in \DIS(\bar h^\star, \bar h),\\
        0 & \text{ otherwise.}
    \end{cases}
\end{align*}
Consider $\cD = \frac{\sum_{i\in A_\gamma} m_i \cD_i}{\sum_{i\in A_\gamma} m_i}$ being a mixture of $\{\cD_i|i\in A_\gamma\}$,
the disagreement region between $\bar h^\star$ and $h$ under distribution $\cD_{|\bar h^\star, \bar h}$ is at least $\frac{3}{4}$. This is because 
\[\cD_{|\bar h^\star, \bar h}(\DIS(\bar h^\star, h))= 1- \cD_{|\bar h^\star, \bar h}(\DIS(\bar h, h) ) \geq 1- \frac{\gamma}{4\gamma} =\frac{3}{4}\,.\]
Hence, by standard PAC learning guarantee, if we obtain $n = O(d + \log(1/\delta''))$ samples from $\cD_{|\bar h^\star, \bar h}$, then with probability $1-\delta''$, all $h$'s in the $\gamma$-ball of $\bar h$ are not consistent.

The current approximation solution $\mathbf{m}^{\bar \cH,\delta'}$ can only guarantee that with probability $1-\frac{2\delta'}{|\bar \cH|}$, we obtain one sample from $\cD_{|\bar h^\star, \bar h}$, i.e.,
\[\prod_{i: \beta_i \geq 4\gamma} \left( 1 - \beta_i \right)^{m_i} \leq \frac{2\delta'}{|\bar \cH|}\,.\]
When we increase $\mathbf{m}^{\bar \cH,\delta'}$ by $n$ times, then with probability at least $1-\frac{2n\delta'}{|\bar \cH|}$, we obtain $n$ samples from $\cD_{|\bar h^\star, \bar h}$.

Hence, by setting $\delta''= \frac{\delta}{4|\bar \cH|}$ and $\delta'=\frac{\delta}{8(d+\log(2|\bar \cH|/\delta))}$, solving the LP for $\bar \cH, \delta'$ to obtain an approximate solution $\mathbf{m}^{\bar \cH, \delta'}$ and multiplying it by $d +\log(1/\delta'')$, the solution $(d +\log(1/\delta'') )\cdot \mathbf{m}^{\bar \cH,\delta'}$ is sufficient to achieve $(\epsilon,\delta)$-PAC accuracy objective (
    \cref{def:pac-obj}) for $\cH$.
\end{proof}

The solution $\mathbf{m}^{\bar \cH,\delta'}$ returned by the approximation algorithm for the $(\varepsilon,\delta')$-PAC objective for $\bar \cH$ is at most a factor
\begin{align*}
    \frac{\log\!\bigl(|\bar{\mathcal H}|/\delta'\bigr)}{\log(1/\delta)}
=&
\frac{\log(8/\delta) + \log(d+\log(2|\bar \cH|/\delta))+d\,\log\!\bigl(k(d+\log(1/\delta))/(\varepsilon\delta)\bigr)}{\log(1/\delta)}\\
=&O(\frac{d(\log k +\log d +\log(1/\epsilon) + \log(1/\delta) + \log(c_{\text{max}}/c_{\text{min}}))}{\log(1/\delta)})
\end{align*}
larger than the optimal solution for the $(\varepsilon,\delta)$-PAC objective for $\bar \cH$.
Since the PAC objective for $\bar \cH$ is easier than the PAC objective for $\cH$, we have $(d +\log(1/\delta'') )\cdot \mathbf{m}^{\bar \cH,\delta'}$ is a $O(\frac{d^2(\log k +\log d +\log(1/\epsilon) + \log(1/\delta)+\log(c_{\text{max}}/c_{\text{min}}))^2}{\log(1/\delta)})$-approximation solution for $(\varepsilon,\delta)$-PAC objective for $\cH$.

\end{proof}
\section{Details and Proofs for
         \texorpdfstring{\cref{sec:mechanism}}{Section~\ref*{sec:mechanism}}}
\label{appx:mechanism-details}

\approxpacarea*

\begin{proof}[Proof of \cref{lmm:approx-pac-area}]
Consider $\cX = \{1,2,\dots,n\}$ as our domain and the hypothesis class $\cH$ of \emph{all singletons} (i.e.\ functions $h_i$ that label exactly one $x \in \cX$ as positive and all others as negative) \emph{plus} the all-negative function. Thus $H := |\cH| = n + 1$. 

We now show that that for any $\textbf{m}$ satisfying $m_1, m_2\geq 2|\cH|\log |\cH|$ and any neighbor $\textbf{m}'$ with $\|\textbf{m}'-\textbf{m}\|_1=1$, we can always find distributions $\cD_1,\cD_1', \cD_2,\cD_2'$ satisfying that
\begin{itemize}
    \item $\textbf{m} = \text{APPROX}(\cD_1,\cD_2)$ and $\textbf{m}' = \text{APPROX}(\cD_1',\cD_2)=\text{APPROX}(\cD_1,\cD_2')$.
    \item Contributions $\textbf{m}$ and $\text{APPROX}(\cD_1',\cD_2)'$ are both feasible for $(\cD_1,\cD_2)$, $(\cD_1',\cD_2)$, and $(\cD_1,\cD_2')$.
\end{itemize}



\paragraph{Construction.} 
    We construct a pair of distributions $(\cD_1, \cD_2)$ by selecting $p_1,p_2,q_1,q_2$ satisfying 
\[
p_1 \;\le\; p_2 \;\le\; \frac{1}{2n}, 
\quad
q_2 \;\le\; q_1 \;\le\; \frac{1}{2n},
\quad
p_1+p_2 \;\le\; \frac{1}{n},
\quad
q_1+q_2 \;\le\; \frac{1}{n}.
\]
Then let
\[
p_3 \;=\;\dots\;=\;p_n \;=\;\frac{1-(p_1+p_2)}{\,n-2\,}, 
\quad 
q_3 \;=\;\dots\;=\;q_n \;=\;\frac{1-(q_1+q_2)}{\,n-2\,}.
\]
Hence let $\cD_1 = (p_1,\dots,p_n)$ and $\cD_2=(q_1,\dots,q_n)$. Both are well-defined distributions. We denote by $\cP$ the family of all such pairs $(\cD_1,\cD_2)$ with different choices of $p_1,p_2,q_1,q_2$ satisfying the above constraints.

    The approximate-optimal solution $\text{APPROX}(\cD_1,\cD_2)$ is found by solving a linear program derived from disagreement regions. For example, the disagreement region between the all-negative hypothesis $h_0$ and a singleton $h_1$ that is positive on point~1 has measure $p_1$ in $\cD_1$ (and $q_1$ in $\cD_2$). The ``binding constraints'' come from pairs $(h_0,h_1)$ and $(h_0,h_2)$ (the singletons with points $1$ or $2$). By contrast, any singleton on point~$i\ge3$ yields a disagreement measure at least $p_3$, which is larger, and thus any feasible solution that satisfies the ``small measure'' constraints with some margin also satisfies these larger measure constraints.


    Thus the main LP constraints reduce to requiring that
\begin{align}
  m_1 \,\log \frac{1}{1-p_1} \;+\; m_2 \,\log \frac{1}{1-q_1}
  \;\;\ge\;\; \log\!\Bigl(\tfrac{H}{\delta}\Bigr), 
  \label{eq:contraint1}\\
  m_1 \,\log \frac{1}{1-p_2} \;+\; m_2 \,\log \frac{1}{1-q_2}
  \;\;\ge\;\; \log\!\Bigl(\tfrac{H}{\delta}\Bigr).
  \label{eq:contraint2}
\end{align}
The solution $\text{APPROX}(\cD_1,\cD_2)$ is the one that \emph{minimizes} $c_1\cdot m_1 + c_2\cdot m_2$ subject to \cref{eq:contraint1,eq:contraint2} (and additional constraints for any bigger disagreements, satisfied by slack).

If $m_1 + m_2 \ge 2 H\log H$, we can show:
\[
m_1 \,\log\!\bigl(\tfrac{1}{1-p_i}\bigr) \;+\; m_2 \,\log\!\bigl(\tfrac{1}{1-q_i}\bigr)
\;>\; m_1 \,\log\!\bigl(\tfrac{1}{1-p_1}\bigr) + m_2 \,\log\!\bigl(\tfrac{1}{1-q_1}\bigr) \;+\;\log H
\]
for each $i \ge 3$, hence those constraints are looser. Hence, satisfying \cref{eq:contraint1,eq:contraint2} by a small margin also satisfies the bigger-disagreement constraints.

    \paragraph{To satisfy the second bullet.} For any $i=3,4,\ldots$, according to our construction, we have $p_i \geq \frac{1}{n}$ and $p_1,p_2\leq \frac{1}{2n}$. Thus, we have 
    \begin{align*}
        1-p_1 \geq& 1-\frac{1}{2n}= e^{-\frac{1}{2n}} -O(\frac{1}{n^2}) = e^{-\frac{1}{n}}e^{\frac{1}{2n}}-O(\frac{1}{n^2})\geq (1-\frac{1}{n})e^{\frac{1}{2n}}-O(\frac{1}{n^2}) \\
        \geq& (1-p_i)e^{\frac{1}{2n}}-O(\frac{1}{n^2})\,.
    \end{align*}
    Thus, we have
    \[
    \log(\frac{1}{1-p_i})\geq  \log(\frac{1}{1-p_1}) + \frac{1}{2n}-O(\frac{1}{n^2}) \geq \log(\frac{1}{1-p_1}) + \frac{1}{3n}
    \]
     for $n$ big enough.
    Then for the constraint corresponding to the disagreement region $\DIS(h_0,h_i)$, we have
    \begin{align*}
        &m_1\log{\frac{1}{1-p_i}} + m_2 \log\frac{1}{1-q_i} \geq m_1 \log{\frac{1}{1-p_1}} + m_2 \log\frac{1}{1-q_1} + \frac{m_1+m_2}{3n} \\
        \geq & m_1 \log{\frac{1}{1-p_1}} + m_2 \log\frac{1}{1-q_1}+ \log H\,,
    \end{align*}
    where the last inequality holds when $m_1+m_2\geq 3H\log H$. Similar results also hold for $\cD_2$.
    That is to say, instead of satisfying ~\cref{eq:contraint1,eq:contraint2}, it would be sufficient to satisfy
    \begin{align*}
        m_1 \log{\frac{1}{1-p_1}} + m_2 \log\frac{1}{1-q_1} \geq \log \frac{3}{\delta}\,,\\
        m_1\log{\frac{1}{1-p_2}} + m_2 \log\frac{1}{1-q_2} \geq \log\frac{3}{\delta}\,.
    \end{align*}
    Hence, for any $(\cD_1,\cD_2)\in \cP$, we have $\frac{\log 3/\delta}{\log H/\delta}  \cdot \text{APPROX}(\cD_1,\cD_2)$ is feasible.
    For simplicity, let's fix $\delta \leq 0.5$ and suppose $H\geq 18$ from now. Then $\frac{1}{2}\cdot \text{APPROX}(\cD_1,\cD_2)$ is sufficient and so is $\text{APPROX}(\cD_1,\cD_2) - (1,1)$. Thus, we justify bullet 2. 
    
    \paragraph{To satisfy the first bullet.}
   Given an $\textbf{m}$ and $\textbf{m}'$, we pick $\cD_1$ specified by $p_1, p_2$, $\cD_2$ specified by $q_1,q_2$ and $\cD_1'$ specified by $p_1',p_2'$ so that Inequalities~\cref{eq:contraint1,eq:contraint2} hold with equality and that $\textbf{m}$ and $\textbf{m}'$ is the only solution to these linear equalities w.r.t. $(p_1, p_2,q_1,q_2)$ and w.r.t. $(p_1', p_2',q_1,q_2)$, respectively. 

    Inequalities~\cref{eq:contraint1,eq:contraint2} can be approximated using a first-order approximation as follows.
    \begin{align*}
        m_1 p_1 + m_2 q_1 = \log\frac{H}{\delta} =: \alpha \,,\\
        m_1 p_2 + m_2 q_2 = \alpha\,,\\
        m_1' p_1' + m_2' q_1 = \alpha\,,\\
        m_1' p_2' + m_2' q_2 = \alpha\,.
    \end{align*}
    Let's pick 
    \begin{align*}
        &p_1 = \frac{\alpha}{m_1m_2},\qquad q_1 = \frac{(1-1/m_2)\alpha}{m_2}\,,\\
        &p_2 = \frac{(1-1/m_1)\alpha}{m_1}, \qquad q_2 = \frac{\alpha}{m_1m_2}\,.
    \end{align*}
    We can justify $(\cD_1,\cD_2)\in \cP$ since $m_1, m_2\geq 2n\log n$.
Then if $\textbf{m}'$ differs from $\textbf{m}$ at $m_1$, by solving 
\begin{align*}
        m_1' p_1' + m_2 q_1 = m_1 p_1 + m_2 q_1\,,\\
        m_1' p_2' + m_2 q_2 = m_1 p_2 + m_2 q_2\,,
    \end{align*}
we have 
\[p_1' = \frac{m_1 p_1}{m_1'},\qquad p_2' = \frac{m_1 p_2}{m_1'}\,.\]
It's easy for us to justify that $(\cD_1',\cD_2)\in \cP$ since $\cD_1'$ is very close to $\cD_1$.

If $m'$ differs from $m$ at $m_2$, by solving 
\begin{align*}
        m_1 p_1' + m_2' q_1 = m_1 p_1 + m_2 q_1\,,\\
        m_1 p_2' + m_2' q_2 = m_1 p_2 + m_2 q_2\,,
    \end{align*}
we have
\[p_1' = \frac{(m_2-m_2')q_1}{m_1} +p_1, \qquad p_2' = \frac{(m_2-m_2')q_2}{m_1} +p_2\,.\]
By plugging in the values of $p_1,p_2,q_1,q_2$,
we have
\begin{align*}
    p_1' &= \frac{(\pm 1)(1-1/m_2)\alpha}{m_1 m_2} +\frac{\alpha}{m_1m_2}\,,\\
    p_2' &= \frac{(\pm 1)\alpha}{m_1^2 m_2} +\frac{(1-1/m_1)\alpha}{m_1}\,.
\end{align*}
It is easy to see that $p_2'$ is very close to $p_2$ and $p_1'<p_2'$. For $p_1'$, we need to make it fall in $(0,\frac{1}{2n})$. When $m_1,m_2\geq 2n\log n$, we have $p_1' \leq \frac{2}{m_1m_2}<\frac{1}{2n}$. Also, $p_1'$ is always positive. So we are done with computing $\cD_1'$. We can compute $\cD_2'$ in the same way.
\end{proof}



\begin{proof}[Proof of \cref{thm:IC-linear-payment}] \label{proof:IC-linear-payment}
    When $\mathbf{f}$ is strategyproof, it must hold that $f_i(\mathbf{m}) - c_i\cdot m_i = f_i(\mathbf{m}') - c_i \cdot m_i'$ for any two neighboring $\mathbf{m},\mathbf{m}'\in M$. 
    If \(f_i(\mathbf{m}) - c_i\cdot m_i> f_i(\mathbf{m}') - c_i \cdot m_i',
    \) agent $i$ will misreport their distribution when the ground truth is $(\cD_i',\bm{\cD}_{-i})$; else if $f_i(\mathbf{m}) - c_i\cdot m_i < f_i(\mathbf{m}') - c_i \cdot m_i'$, agent $i$ will misreport when the ground truth is $\bm{\mathcal{D}}$, which conflicts with truthfulness.  Since $M$ is connected, we have $f_i(\mathbf{m}) - c_i\cdot m_i = C_i$ for all $\mathbf{m} \in M$. 
\end{proof}
\section{Results for Expected Accuracy Objective}\label{appx:expected}
\subsection{Approximation Algorithm}\label{appx:expected_approx}
Recall that the optimization problem with the expected accuracy objective given an error parameter $\varepsilon$ is
\[
    \min_{\mathbf{m} \in \mathbb{N}^k} \quad \mathbf{c}^\top \mathbf{m}
    \]
    subject to
    \begin{equation}
        \max_{h^\star\in \cH} \mathbb{E}_{S\sim \mathcal{P}(\bm{\mathcal{D}}, \bm{m}, h^\star)}\!\bigl[\text{err}^{\text{ERM}}_{\mathcal{D}_i,h^\star}(S)\bigr]  \leq \varepsilon\,,\forall i\in \cA\,.\label{eq:opt-expected}
    \end{equation}
Let's denote the optimal solution to the above problem as $\mathbf{m}^{\star, \textrm{exp}}(\varepsilon)$.


For any pair \((h_j,h_t)\) write
\[
E_{j,t}:=\Bigl\{\text{no sample lies in }\DIS\!\bigl(h_j,h_t\bigr)\Bigr\},
\qquad
a^{i}_{j,t}:=\cD_i\!\bigl(\DIS(h_j,h_t)\bigr),
\]
the “no-sample” event and its probability mass under agent \(i\)’s distribution \(\cD_i\).

Without loss of generality, relabel the hypotheses so that the disagreement masses are non-increasing:
\[
a^{i}_1 \;\ge a^{i}_2 \;\ge \dots \;\ge a^{i}_K.
\]

Define the first “small” index by
\[
n := \min\!\bigl\{j \,\bigl|\, a^{i}_j \le \tfrac{\varepsilon}{2}\bigr\}
      \;\;(\text{set } n=K\text{ if no such }j\text{ exists}).
\]
\[
\Lambda_{<j}\;:=\;\bigwedge_{k<j}\neg E_k,
\qquad
\Delta_{\le j}\;:=\;\bigvee_{k\le j}E_k,
\qquad
a^{i}_{K+1}:=0.
\]
These abbreviations mean, respectively, “no disagreement observed yet”  
and “some disagreement observed by step \(j\).”

\begin{align*}
\mathbb{E}_{S\sim\mathcal{P}(\bm{\mathcal{D}},\bm{m},h^\star)}
  \!\Bigl[\operatorname{err}^{\text{ERM}}_{\mathcal{D}_i,h^\star}(S)\Bigr]
&= \sum_{j=1}^{H-1}
    \Pr\!\bigl(E_j\land\Lambda_{<j}\bigr)\,a^{i}_{j}                         \\[4pt]
&\le \sum_{j=1}^{n}
    \Pr\!\bigl(E_j\land\Lambda_{<j}\bigr)\,a^{i}_{j} \;+\; \frac{\varepsilon}{2} \\[4pt]
&= \sum_{j=1}^{n}
    \Pr\!\bigl(\Delta_{\le j}\bigr)\,\bigl(a^{i}_{j}-a^{i}_{j+1}\bigr)
    \;+\; \frac{\varepsilon}{2}.                                              
\end{align*}

For any index set $[j]=\{1,\dots,j\}$ let
\[
\Delta_{\le j}:= \bigvee_{t\le j}E_t .
\]
Then
\[
\sup_{t\le j}\Pr(E_t)
\;\;\le\;\;
\Pr(\Delta_{\le j})
\;\;\le\;\;
\sum_{t=1}^{j}\Pr(E_t)
\;\;\le\;\;
j\,\sup_{t\le j}\Pr(E_t).          \tag{UB}
\]
Replacing $\Pr(\Delta_{\le j})$ by its upper bound $\sum_{t=1}^{j}\Pr(E_t)$ in the telescoping sum from the previous step yields
\begin{align*}
&\sum_{j=1}^{n}\Bigl(\sum_{t=1}^{j}\Pr(E_t)\Bigr)\bigl(a^{i}_j-a^{i}_{j+1}\bigr)
       +\frac{\varepsilon}{2} \\
&\qquad\;=\;
   \sum_{t=1}^{n}\Pr(E_t)\,a^{i}_t \;+\;\frac{\varepsilon}{2}
   \;\le\;\varepsilon,
\end{align*}
where we set $a^{i}_{n+1}=0$ for the telescoping identity.

Each probability
\[
\Pr(E_t)=\prod_{k=1}^{K}\bigl(1-\cD_k(\DIS(h_0,h_t))\bigr)^{m_k}
\]
is a product of log-affine functions of $\mathbf m=(m_1,\dots,m_K)$ and is therefore convex; the entire left-hand side above is a non-negative weighted sum of convex functions, hence convex in $\mathbf m$.

Impose the term-wise bound
\(
\Pr(E_t)\,a^{i}_t \;\le\; \tfrac{\varepsilon}{2H}
\)
for every disagreement index \(t\).  
Define the set of “large-mass” pairs
\[
\cP_{\varepsilon/2}\;:=\;
\Bigl\{(h_1,h_2)\in\cH^2:\;
      \max_{i\le k}\cD_i\bigl(\DIS(h_1,h_2)\bigr)>\tfrac{\varepsilon}{2}\Bigr\}.
\]
For each such pair write  
\[p_{i}(h_1,h_2):=\cD_i(\DIS(h_1,h_2))\] and  
\[a(h_1,h_2):=\min_{i:\,p_{i}(h_1,h_2)>\varepsilon/2} p_{i}(h_1,h_2).\]

\smallskip
\noindent
The resulting LP is

\begin{equation} \label{eq:relaxed-expected}
\begin{aligned}
\min_{\mathbf m\in\mathbb N^{k}}\;&\; \mathbf c^\top\mathbf m\\
\text{s.t.}\quad
&\sum_{i=1}^{k} m_i\,
  \log\bigl(1-p_{i}(h_1,h_2)\bigr)
  \;\le\;
  \log\!\Bigl(\tfrac{\varepsilon}{2Ha(h_1,h_2)}\Bigr),
  \quad
  \forall (h_1,h_2)\in\cP_{\varepsilon/2}.
\end{aligned}
\end{equation}

    
\begin{theorem}\label{prop:relaxed-expected}
     The solution to \Cref{eq:relaxed-expected} is a feasible solution to \Cref{eq:opt-expected}. Given the optimal solution $\mathbf{m}^{\star, \textrm{exp}}(\frac{\varepsilon}{4})$ to \Cref{eq:opt-expected} with error parameter $\frac{\varepsilon}{4}$, then $\log(2H) \mathbf{m}^{\star, \textrm{exp}}(\frac{\varepsilon}{4})$ is a feasible solution to \Cref{eq:relaxed-expected}.
\end{theorem}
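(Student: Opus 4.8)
The statement splits into two directions, and my plan for each is to exponentiate the log-affine LP constraints of \cref{eq:relaxed-expected} into statements about the ``no-sample'' probabilities $\Pr(E_{h_1,h_2})=\prod_i(1-p_i(h_1,h_2))^{m_i}$ and then match them against the telescoped expected-error bound derived immediately above the theorem.

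For the first direction (any solution of \cref{eq:relaxed-expected} is feasible for \cref{eq:opt-expected}), I would fix an agent $i$ and a target $h^\star$ and relabel alternatives as in the derivation, so that the bound reads $\E[\operatorname{err}^{\ERM}_{\cD_i,h^\star}(S)]\le \sum_{t}\Pr(E_t)\,a^{i}_t+\varepsilon/2$, where the retained sum ranges over the $h_t$ with $a^{i}_t=p_i(h^\star,h_t)>\varepsilon/2$, of which there are at most $H$. For each such $t$ the pair $(h^\star,h_t)$ lies in $\cP_{\varepsilon/2}$, so its LP constraint applies and, exponentiated, gives $\Pr(E_t)\le \varepsilon/\bigl(2H\,a(h^\star,h_t)\bigr)$. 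The step I must make precise is that the per-pair threshold $a(h^\star,h_t)$ dominates the mass $a^{i}_t$ of the agent whose error is being bounded, which yields the term-wise bound $\Pr(E_t)\,a^{i}_t\le \varepsilon/(2H)$. Summing the $\le H$ retained terms gives $\varepsilon/2$, and adding the tail $\varepsilon/2$ gives total error $\le\varepsilon$; since $i$ and $h^\star$ were arbitrary, this is exactly the constraint of \cref{eq:opt-expected}.

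For the second direction, start from $\mathbf m^\star:=\mathbf m^{\star,\textrm{exp}}(\varepsilon/4)$, which satisfies $\E[\operatorname{err}^{\ERM}_{\cD_i,h^\star}]\le\varepsilon/4$ for every agent $i$ and every target. Fix a pair $(h_1,h_2)\in\cP_{\varepsilon/2}$ and let $i$ realize $a(h_1,h_2)$. Taking $h^\star=h_1$, I would observe that $\operatorname{err}^{\ERM}_{\cD_i}(S)\ge p_i(h_1,h_2)\,\mathbf{1}[E_{h_1,h_2}]$, because whenever $h_2$ is consistent it is an admissible ERM output of error $p_i(h_1,h_2)$; taking expectations gives $\Pr_{\mathbf m^\star}(E_{h_1,h_2})\le (\varepsilon/4)/a(h_1,h_2)\le 1/2$, using $a(h_1,h_2)>\varepsilon/2$. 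Scaling multiplies the exponents, so with $L=\log(2H)$ we get $\Pr_{L\mathbf m^\star}(E_{h_1,h_2})=\Pr_{\mathbf m^\star}(E_{h_1,h_2})^{L}\le (\varepsilon/4a)\cdot 2^{-(L-1)}$; since $2^{-(L-1)}\le 2/H$, the right-hand side is at most $\varepsilon/\bigl(2H\,a(h_1,h_2)\bigr)$, which is precisely the LP constraint for that pair. Ranging over all pairs in $\cP_{\varepsilon/2}$ shows $L\mathbf m^\star$ is LP-feasible.

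The main obstacle is the aggregation over agents encoded in the single per-pair quantity $a(h_1,h_2)$: \cref{eq:relaxed-expected} posts one constraint per disagreement pair, yet the expected-error guarantee must hold for every agent simultaneously, so I must verify that $a(h_1,h_2)$ dominates the largest participating disagreement mass on that pair (otherwise the term-wise bound in the first direction fails). A secondary point is pinning the scaling constant: the argument needs $2^{-(L-1)}\le 2/H$, which forces $L\gtrsim\log_2 H$, so I would fix the logarithm base in $\log(2H)$ accordingly and absorb the boundary index $n$ (where $a^{i}_n\le\varepsilon/2$) into the $\varepsilon/2$ tail rather than the head.
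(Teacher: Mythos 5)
Your overall route is the same as the paper's: for the forward direction, exponentiate the LP constraints into term-wise bounds $\Pr(E_t)\,a^i_t\le \varepsilon/(2H)$ and plug them into the telescoped expected-error bound; for the converse, lower-bound the expected error by $p_i(h_1,h_2)\Pr(E_{h_1,h_2})$ (since a surviving $h_2$ is an admissible ERM output) and scale the exponents by $\log(2H)$. Your second direction is correct and in fact slightly cleaner than the paper's computation, which lands at $\varepsilon/(aH)$ rather than the required $\varepsilon/(2Ha)$; your base-2 bookkeeping hits the LP right-hand side exactly.

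However, the ``main obstacle'' you flag in the first direction is a genuine gap, and under the paper's stated definitions it cannot be closed. The paper sets $a(h_1,h_2):=\min_{i:\,p_i(h_1,h_2)>\varepsilon/2}p_i(h_1,h_2)$, so for any agent $i$ participating in the pair you have $a(h^\star,h_t)\le a^i_t$, which is the \emph{opposite} of the domination you need. The exponentiated constraint gives only $\Pr(E_t)\,a^i_t\le \frac{\varepsilon}{2H}\cdot\frac{a^i_t}{a(h^\star,h_t)}$, and the ratio $a^i_t/a(h^\star,h_t)$ can be as large as $2/\varepsilon$: e.g.\ with two agents, $p_1=0.9$ and $p_2=0.06$ at $\varepsilon=0.1$, the constraint permits $\Pr(E_t)\approx \varepsilon/(0.12H)$, so the agent-1 term is about $0.75/H\gg \varepsilon/(2H)$, and summing $H$ such terms overshoots $\varepsilon/2$ by an order of magnitude. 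The paper's own proof of this direction is just the assertion ``it is direct to see,'' so it does not resolve this either; the fix is to take $a(h_1,h_2)$ to be the \emph{maximum} participating mass $\max_{i:\,p_i>\varepsilon/2}p_i(h_1,h_2)$. That repair preserves your converse direction unchanged, because you can instantiate the lower bound $\E[\err^{\ERM}_{\cD_i,h_1}(S)]\ge p_i(h_1,h_2)\Pr(E_{h_1,h_2})$ at the maximizing agent, yielding $\Pr(E_{h_1,h_2})\le (\varepsilon/4)/a(h_1,h_2)\le 1/2$ exactly as before. With that one-word change your argument closes; without it, the term-wise bound in your first paragraph fails.
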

   
\begin{proof}[Proof of \cref{prop:relaxed-expected}]
     It is direct to see the solution to \Cref{eq:relaxed-expected} is a feasible solution to \Cref{eq:opt-expected}. 
     Given contribution $\mathbf{m}= \mathbf{m}^{\star, \textrm{exp}}(\frac{\varepsilon}{4})$, we have $\Pr(E_{j,t})a^i_{j,t} \leq \frac{\varepsilon}{4}$ for all $h_j,h_t\in \cH$. That is to say,
     \[\prod_{i'=1}^k \left(1 - a^{i'}_{j,t} \right)^{m_{i'}} \leq \frac{\varepsilon}{4 a^i_{j,t}} \,.\]
     For $a^i_{j,t}>\frac{\varepsilon}{2}$, since $\log (2H) =\log H+1 \geq \frac{\log (a^i_{j,t} H/\varepsilon)}{\log (4a^i_{j,t}/\varepsilon)} =\frac{\log H+ \log (a^i_{j,t}/\varepsilon)}{\log (4a^i_{j,t}/\varepsilon)}$, we have
     \[\prod_{i'=1}^k \left(1 - a^{i'}_{j,t} \right)^{m_{i'} \cdot \log(2H)} 
 \leq \prod_{i'=1}^k \left(1 - a^{i'}_{j,t} \right)^{m_{i'} \cdot \frac{\log (a^i_{j,t} H/\varepsilon)}{\log (4a^i_{j,t}/\varepsilon)}} \leq \frac{\varepsilon}{ a^i_{j,t} H}\,.\]
\end{proof}

\paragraph{The performance of running ERM over the cover $\bar \cH$ for the expected accuracy objective.}
For any $\cH$, let $\bar \cH$ be a $\gamma$-cover of $\cH$. Then for any contribution $\textbf{m}$ satisfying this constraint, any $h^\star\in \cH$ and any agent $i$, given the labeled data $S^{h^\star}= \cup_{j\in [k]} \{(x,h^\star(x))|x\in U_j\}$, let $\bar \err = \max \{\err_{\cD_i}(\bar h)|\bar h\in \bar \cH: \err_{S^{h^\star}}(\bar h) =  \min_{h'\in \bar \cH}\err_{S^{h^\star}}(h')\}$ denote the worst case error when running ERM over $\bar \cH$. We can show that $\bar \err$ is small in expectation. More specifically, when we can achieve $\varepsilon/2$-expected accuracy guarantee for $\bar \cH$, we have

When we can achieve $\varepsilon$ expected accuracy guarantee for the cover $\bar \cH$, we have
\begin{align*}
    \EE{\bar \err} &\leq  \EEc{\bar \err}{\err_{S^{h^\star}}(\bar h^\star)=0}\PP{\err_{S^{h^\star}}(\bar h^\star)=0} + \PP{\err_{S^{h^\star}}(\bar h^\star)\neq 0}\\
    &\leq  \EEc{\bar \err}{\err_{S^{h^\star}}(\bar h^\star)=0} + \PP{\err_{S^{h^\star}}(\bar h^\star)\neq 0}\\
    &\leq  \varepsilon/2 + (1-(1-\gamma)^{\frac{k \vcd(\cH)}{\varepsilon}})\,.
\end{align*}
By setting $\gamma = O(\frac{\varepsilon^2}{k \cdot \vcd(\cH)})$, we have $\EE{\bar \err} \leq \varepsilon$.

\subsection{Local Obliviousness of the Approximation Algorithm} \label{app:local_obv_expected}
Note that the image space of this approximation algorithm is $[0,\frac{2\log(H)}{\varepsilon}]^k$ according to \cref{eq:relaxed-expected}. Then we show that most area of $[0,\frac{2\log(H)}{\varepsilon}]^k$ is oblivious.

\begin{lemma}
    For any $k<H\in \NN$ and $\varepsilon<1-\frac{1}{2H}$, there exists an instance of expected accuracy learning instance of $(\cH,\varepsilon)$ with $|\cH| =H$  in the $k$-agent setting such that the approximation algorithm introduced in \cref{prop:relaxed-expected} is oblivious at all $\mathbf{m} \in [2+ \frac{\log(1/\varepsilon)}{\log(2H)},\frac{2\log(H)}{\varepsilon}]^k$. 
\end{lemma}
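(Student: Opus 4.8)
The plan is to mirror the singleton construction behind \cref{lmm:approx-pac-area} and adapt it to the expected-accuracy LP of \cref{eq:relaxed-expected}. I would fix the domain $\cX=\{1,\dots,n\}$ with $n=H-1$ and take $\cH$ to consist of the $n$ singletons together with the all-negative hypothesis $h_0$, so $|\cH|=H$; since $k<H$ there is room to designate $k$ of the points as \emph{special}, point $j$ giving the disagreement region $\DIS(h_0,h_j)=\{j\}$ with mass $p_j^{(i)}$ under agent $i$. The first task is to show that, for $\mathbf m$ in the stated box, only the $k$ constraints coming from the special pairs $(h_0,h_j)$ are binding: the two-singleton pairs and the generic points $j>k$ either fall outside $\cP_{\varepsilon/2}$ or, when they carry larger mass, raise the left-hand side of \cref{eq:relaxed-expected} faster than its mass-dependent right-hand side $\log(2Ha/\varepsilon)$ once the coordinates of $\mathbf m$ are as large as the range guarantees. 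This reduces local obliviousness to solving a $k\times k$ system of binding equalities, exactly as in the two-agent PAC argument.

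To realize a target $\mathbf m$, I would write each special constraint at equality,
\[
\sum_{i=1}^{k} m_i\,\log\frac{1}{1-p_j^{(i)}}\;=\;\log\frac{2H\,a_j}{\varepsilon},
\qquad a_j=\min_{i:\,p_j^{(i)}>\varepsilon/2}p_j^{(i)},\qquad j\in[k],
\]
and choose the mass matrix $\bigl(p_j^{(i)}\bigr)_{j,i}$ to be diagonally dominant: on point $j$ a single designated agent carries a mass $p_j^{(j)}\in(\varepsilon/2,\,1-\tfrac{1}{2H})$ that pins $a_j$, while the off-diagonal masses are tiny and serve only to couple the remaining agents into constraint $j$. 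The single-variable map $p\mapsto \log(2Hp/\varepsilon)/\log\bigl(1/(1-p)\bigr)$ is monotone, and as $p$ sweeps $(\varepsilon/2,\,1-\tfrac{1}{2H})$ the induced coordinate covers the interval $\bigl[\,2+\log(1/\varepsilon)/\log(2H),\ 2\log H/\varepsilon\,\bigr]$, so every $\mathbf m$ in the box is attainable with valid distributions (nonnegative masses summing to one).

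For each agent $i$ I would then produce a deviation $\cD_i'$ with $\mathrm{APPROX}(\cD_i',\bm{\cD}_{-i})=\mathbf m'$. Perturbing only agent $i$'s $k$ masses on the special points gives exactly $k$ degrees of freedom to re-impose the $k$ binding equalities at the neighbor $\mathbf m'$; because $\mathbf m'$ differs from $\mathbf m$ in a single coordinate, the required change in agent $i$'s masses is of order $\log(2H)/m_i'$ and hence small enough to preserve the coarse structure of the instance. Feasibility of both $\mathbf m$ and $\mathbf m'$ for \cref{eq:opt-expected} across $\bm{\cD}$ and every $(\cD_i',\bm{\cD}_{-i})$ then follows as in the PAC case: by \cref{prop:relaxed-expected} the LP solution over-satisfies the true expected-error requirement by a multiplicative $\log(2H)$ factor, leaving ample slack to decrease any one coordinate by one, while increasing a coordinate is feasible by monotonicity.

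I expect the main obstacle to be the coupling introduced by the right-hand side $\log(2Ha_j/\varepsilon)$, which—unlike the fixed constant $\log(H/\delta)$ of the PAC LP—depends on the very masses $a_j$ that I am tuning. In the diagonally dominant design this bites only when the deviating agent is the one pinning $a_j$: perturbing its mass then moves both sides of constraint $j$, so the clean linear inversion of the PAC proof is replaced by a one-dimensional fixed-point equation in the perturbation, which I would resolve by monotonicity and continuity. I would also keep every small mass bounded away from the threshold $\varepsilon/2$, so that the active set $\{i:p_j^{(i)}>\varepsilon/2\}$—and hence which agent pins each $a_j$—stays locally constant, ruling out the discontinuities a threshold crossing would otherwise create.
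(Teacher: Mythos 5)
Your blueprint is the same as the paper's — singletons plus the all-negative hypothesis, one designated point per agent whose pair $(h_0,h_j)$ pins $m_j$ through an equation of the form $(1-p)^{m}p=\varepsilon/(2H)$, and the range computation $m=\log(2Hp/\varepsilon)/\log(1/(1-p))$ sweeping the stated box as $p$ ranges over $[\varepsilon/2,\,1-\tfrac1{2H}]$. But there is a concrete gap in your instance: you take the domain to be exactly the $H-1$ points carried by the singletons, so \emph{every} unit of probability mass an agent holds lies in some disagreement region $\DIS(h_0,h_j)=\{j\}$ and therefore feeds an LP constraint. You never say where the bulk of each agent's mass goes, and there is no safe place to put it. If agent $i$'s special mass is $p_i^{(i)}$, the remaining $1-p_i^{(i)}-o(1)$ must sit on other singleton points; near the lower end of the box ($m_i\approx 2+\log(1/\varepsilon)/\log(2H)$, so $p_i^{(i)}\approx 1-\tfrac1{2H}$) the leftover is about $\tfrac1{2H}$, and whenever $\varepsilon<1/H$ the point receiving it has mass above $\varepsilon/2$, enters $\cP_{\varepsilon/2}$, and yields the constraint $(1-\tfrac1{2H})^{m_i}\cdot\tfrac1{2H}\le\varepsilon/(2H)$, i.e.\ $m_i\gtrsim 2H\log(1/\varepsilon)$ — vastly larger than the target, so $\mathrm{APPROX}$ does not return your $\mathbf m$. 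Spreading the leftover over many points only makes matters worse, since $\log(2Hp/\varepsilon)/\log(1/(1-p))$ blows up as $p$ decreases. The paper avoids this by using $H$ domain points with singletons only over the first $H-1$: the extra point $x_H$ is labeled negative by every hypothesis, so mass parked there belongs to no disagreement region and generates no constraint. With that one change each $\cD_i$ can be supported on $\{x_i,x_H\}$ alone, the LP decouples into $k$ scalar equations, and your $k\times k$ inversion, the fixed-point argument for the $a_j$-dependent right-hand side, and the active-set bookkeeping all become unnecessary.

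Two smaller points. First, your feasibility argument mis-quotes \cref{prop:relaxed-expected}: that result does not say the LP solution over-satisfies the expected-error requirement by a multiplicative $\log(2H)$. The slack that lets you decrement one coordinate comes instead from the per-pair bound $\Pr(E)\cdot a\le\varepsilon/(2H)$ together with $1-p\ge\tfrac1{2H}$, which gives $(1-p)^{m-1}p\le\varepsilon$ directly (and, in the decoupled instance, agent $i$'s expected error collapses to the single term involving $x_i$, so no sum over pairs needs to be controlled). Second, your instinct to keep small off-diagonal masses so that a deviation by agent $i$ can move coordinate $\ell\ne i$ is legitimate — the obliviousness definition does quantify over all $i$ — but it is cleaner to start from the fully diagonal instance and introduce a single tiny mass on $x_\ell$ only inside the deviation $\cD_i'$, keeping it below $\varepsilon/2$ so it never perturbs $a_\ell$.
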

\begin{proof}
Let the input space be the $H$ points $\mathcal X=\{x_1,\dots,x_H\}$.  
The hypothesis class $\mathcal H$ consists of all singletons over $\{x_1,\dots,x_{H-1}\}$—denote them $h_1,\dots,h_{H-1}$—together with the all-negative hypothesis $h_0$.
Fix a contribution vector $\mathbf m=(m_1,\dots,m_k)$.  
For each agent $i\in[k]$ choose $c_i\in[\varepsilon/2,\;1-\tfrac1{2H}]$ that solves
\[
(1-c_i)^{m_i}\,c_i \;=\; \frac{\varepsilon}{2H}.
\tag{1}
\]
Such a solution exists whenever
\[
m_i \;\in\;
\bigl[\,1+\tfrac{\log(1/\varepsilon)}{\log(2H)},\;
\tfrac{2\log H}{\varepsilon}\bigr].
\]
Define $\mathcal D_i$ by setting $\mathcal D_i(x_i)=c_i$ and $\mathcal D_i(x_H)=1-c_i$.

Fix an agent $i$ and a hypothesis $h_j$ with $j\in[H-1]$.  
Under $h_j$ the expected ERM error of agent $i$ equals
\[
\mathbb E_{S\sim\mathcal P(\bm{\mathcal D},\mathbf m,h_j)}
\bigl[\mathrm{err}^{\mathrm{ERM}}_{\mathcal D_i,h_j}(S)\bigr]
=
\Pr[\text{$x_i$ and $x_j$ both unseen in $S$}]\;c_i
\;\le\;
\Pr[\text{$x_i$ unseen in $S$}]\;c_i .
\]
For the all-negative hypothesis $h_0$ the same bound holds:
\[
\mathbb E_{S\sim\mathcal P(\bm{\mathcal D},\mathbf m,h_0)}
\bigl[\mathrm{err}^{\mathrm{ERM}}_{\mathcal D_i,h_0}(S)\bigr]
=
\Pr[\text{$x_i$ unseen in $S$}]\;c_i .
\]
Because $c_i\le 1-\tfrac1{2H}$, if agent $i$ contributes only $m_i-1$ samples then
\[
\Pr[\text{$x_i$ unseen in $S$}]\;c_i
=
(1-c_i)^{m_i-1}\,c_i
\;\le\;
\frac{\varepsilon}{2H(1-c_i)}
\;\le\;
\varepsilon .
\]
Hence $\mathbf m-\mathbf 1$ is also feasible for the profile $\bm{\mathcal D}$.  
More generally, any Hamming neighbour $\mathbf m'$ of $\mathbf m$ is feasible for some modified profile $(\mathcal D_i',\bm{\mathcal D}_{-i})$, so the approximation algorithm remains oblivious throughout the specified hyper-rectangle.

\end{proof}

\end{document}